\newtheorem{theorem}{Theorem}%
\newtheorem{lemma}[theorem]{Lemma}
\newtheorem{definition}{Definition}
\newtheorem{proposition}{Proposition}
\newtheorem{assumption}{Assumption}
\newtheorem{remark}{Remark}
\definecolor{revXIII}{HTML}{D45500}
\newcommand{\Arxiv}[1]{}
\newcommand{\Rev}[1]{\textcolor{black}{#1}}
\newcommand{\Elio}[1]{{#1}}
\newcommand{\sgn}[2]{\left\lceil#1\right\rfloor^{#2}}
\newcommand{\homm}[2]{\left\llbracket#1\right\rrbracket^{#2}}
\newcommand{\normm}[2]{\left\llbracket#1\right\rrbracket_{#2}}
\newcommand{\mfs}[1]{{\normalfont\textsf{#1}}}
\newcommand{\mf}{\mathbf}
\title{Exact Leader Estimation: A New Approach for Distributed Differentiation\footnote{\textcolor{red}{This is the pre-peer reviewed version of the following article: R. Aldana-López, D. Gómez-Gutiérrez, E. Usai, H. Haimovich, ``Exact Leader Estimation: A New Approach for Distributed Differentiation", Int J Robust Nonlinear Control, 2025, which has been published in final form at \url{https://doi.org/10.1002/rnc.7901}. This article may be used for non-commercial purposes in accordance with Wiley Terms and Conditions for Use of Self-Archived Versions. \textbf{Please cite the publisher's version}. For the publisher's version and full citation details please see: \url{https://doi.org/10.1002/rnc.7901}}}}
\author[1,2]{Rodrigo Aldana-López}
\author[1,3]{David Gómez-Gutiérrez*}
\author[4]{Elio Usai}
\author[5]{Hernan Haimovich}
\date{}
\affil[1]{Intel Tecnología de M\'exico, Intel Labs, Av. del Bosque 1001, 45019, Zapopan, Jalisco, Mexico.}
\affil[2]{Cinvestav, Unidad Guadalajara, Av. del Bosque 1145, 45019, Zapopan, Jalisco, Mexico.}
\affil[3]{Tecnológico Nacional de México, Instituto Tecnológico José Mario Molina Pasquel y Henríquez, Cam. Arenero 1101, 45019, Zapopan, Jalisco, Mexico.}
\affil[4]{University of Cagliari, Department of Electrical and Electronic Engineering, Piazza d'Armi, 09123 Cagliari, Italy.}
\affil[4]{Centro Internacional Franco-Argentino de Ciencias de la Informaci\'on y de Sistemas (CIFASIS) CONICET-UNR, Ocampo \& Esmeralda, 2000, Rosario, Argentina.}
\begin{document}
\maketitle
\begin{abstract}
A novel strategy aimed at cooperatively differentiating a signal among multiple interacting agents is introduced, where none of the agents needs to know which agent is the leader, i.e. the one producing the signal to be differentiated. Every agent communicates only a scalar variable to its neighbors; except for the leader, all agents execute the same algorithm. The proposed strategy can effectively obtain derivatives up to arbitrary $m$-th order in a finite time under the assumption that the $(m+1)$-th derivative is bounded. The strategy borrows some of its structure from the celebrated homogeneous robust exact differentiator by A. Levant, inheriting its exact differentiation capability and robustness to measurement noise. Hence, the proposed strategy can be said to perform robust exact distributed differentiation. In addition, and for the first time in the distributed leader-observer literature, sampled-data communication and bounded measurement noise are considered, and corresponding steady-state worst-case accuracy bounds are derived. The effectiveness of the proposed strategy is verified numerically for second- and fourth-order systems, i.e., for estimating derivatives of up to first and third order, respectively.
\end{abstract}

\footnotetext{\textbf{Abbreviations:} RED (Robust Exact Differentiator).}

\section{Introduction}
\label{Sec:Intro}

A distributed leader-observer is a neighbor-based algorithm that a group of interacting agents implements so that each one obtains an estimate of the leader's state, even when only a subset of them directly communicates with the leader \Elio{and no agent has information on whether it is communicating with the actual leader or not}. \Rev{This problem is important in practice, for instance in formation control of multi-agent systems, since agents often lack direct access to the leader due to obstacles or limited range. Enabling each agent to estimate the leader’s state through neighbor interactions ensures robust coordination, even if the leader changes, without needing centralized control or reconfiguration.} In this case, an unknown but bounded input signal is assumed to drive the leader. Such observers are typically based on distributed consensus algorithms~\cite{cao2012overview,yan2021observer} and are often referred to as distributed consensus observers. Distributed leader-observers have attracted attention in multi-agent tracking and formation control applications~\cite{hong2008}; when the convergence occurs in a finite time, it allows for the separation of the problem into the consensus-based distributed estimation of the leader's state and the local, not consensus-based, reference tracking control toward the leader's state~\cite{zuo2017fixed,miao2018distributed}.

Numerous strategies for designing a distributed leader-observer have been explored in existing research, which are summarized in Table \ref{tab:methods}. While many works focus on arbitrary order leader dynamics~\cite{shi2015,zuo2017fixed,li2020finite,ghommam2021,wang2022,zhang2023,wang2024,wang2022a,chang2022,xu2013,aldana2022dynamic}, some restrict their analysis to second-order systems~\cite{trujillo2020autonomous,ajwad2021,zhao2013,hong2008,zhang2014,cheng2013,hu2010,ni2021,miao2018distributed,li2011finite,zhao2016} due to their relevance in various mechanical applications where double-integrator dynamics are common. These methods differ primarily in the type of information shared between neighbors. In some instances, the entire observer state is shared, as in \cite{ajwad2021,ni2021,miao2018distributed,li2011finite,trujillo2020autonomous} for second-order systems and \cite{cao2015,shi2015,zuo2017fixed,li2020finite,ghommam2021,wang2022,zhang2023,wang2024,wang2022a,Li2018b} for arbitrary order. Alternatively, sharing only the observer's scalar output, which usually represents the leader's estimated position, is another method applied in \cite{zhao2013,hong2008,zhang2014,cheng2013,hu2010} for second-order systems and \cite{zhao2016,chang2022,xu2013,aldana2022dynamic} for arbitrary order. This approach is often preferred since it simplifies the observer structure and reduces the communication load in the protocol.

Additionally, the observer structures vary depending on assumptions about the leader's dynamics. A common assumption is to have autonomous dynamics without external inputs, as seen in \cite{ajwad2021,zhao2013} for second-order systems and \cite{zhao2016,cao2015,chang2022} for arbitrary order, suitable for modeling simple oscillators. However, this assumption can be restrictive for some applications. Considering arbitrary leader inputs under the premise that all followers have access to this information has been explored. However, it compromises the distributed nature of the observer, as seen in \cite{hong2008,zhang2014,cheng2013,hu2010} for second-order systems and \cite{xu2013} for arbitrary order. A more general approach assumes unknown yet bounded leader inputs, enabling the use of sliding mode techniques to design consensus-based observers~\cite{pilloni2023variable} that remain effective under these conditions, as applied in \cite{ni2021,miao2018distributed,li2011finite} for second-order systems and \cite{shi2015,zuo2017fixed,li2020finite,ghommam2021,wang2022,zhang2023,wang2024,wang2022a,aldana2022dynamic} for arbitrary order.

Furthermore, while most observer proposals and analyses are performed in continuous time, the practical implementation of observers typically occurs on a digital computing unit. It involves information sharing over communication networks, making discrete-time sampled-data models more realistic and appropriate for analysis and design. Efforts to develop sampled-data observers have been published, for instance, in \cite{xu2013,Cheng201441}, though they often assume global knowledge of some leader's information. In contrast, recent studies have adapted continuous-time models to incorporate discrete communication using event-triggered methods, as in \cite{wang2022a,chang2022}.

Moreover, it is common for neighbors to access or derive measurements of the leader's position through sensors, potentially introducing noise. Although some studies mention measurement noise or conduct experiments to assess performance under such conditions, like \cite{aldana2022dynamic}, formal analysis in this area is generally lacking. An exception is \cite{hu2010}, which includes a formal analysis considering Gaussian measurement noise. However, as mentioned before, this work is limited to second-order systems and global knowledge of leader's information.

\Elio{Finally, note that a discretized standard differentiator \cite{livne2014} can be used to estimate the leader's state and its derivatives by agents with access to the leader. These estimations are robust to sampling time and measurement noise. This information can then be passed from agent to agent across the network using a flooding strategy. However, this approach has two significant drawbacks. First, it assumes that agents know which other agent is the leader and applies the differentiator accordingly. Second, multi-hop flooding strategies are not preferred in large multi-agent systems because they consume much more communication bandwidth than consensus protocols, which rely only on local communication.}

Motivated by the above discussion on the current state of the art, we assume a leader with an arbitrary order integrator dynamics driven by an unknown input signal bounded by a known constant, with an output signal as the first variable in the integrator chain, accessed only by a subset of agents \Elio{that are unaware who the leader is, thus rendering a flooding strategy unfeasible}. In this context, the contributions of this work are summarized as follows:
\begin{itemize}
    \item Since the aim is distributed differentiation, no agent requires knowledge of the actual leader's input, i.e. of the highest-order derivative of the function to be differentiated. The only piece of knowledge required from this high-order derivative is its bound.
    \item Every agent communicates only a scalar variable with its neighbors.
    \item For the noise-free and continuous output case, we prove exact convergence to the function to be differentiated and its derivatives (i.e. the leader's state).
    \item Under {measurement noise} affecting each output and {sampled outputs}, the accuracy of the algorithm in terms of differentiation error is analyzed.
    \item The order of differentiation can be arbitrary.
\end{itemize}

Table~\ref{tab:methods} summarizes and highlights the contribution of the paper with respect to the cited literature.

\begin{table}
\centering
\normalfont{
\begin{tabular}{m{3cm} m{1.5cm} m{2.5cm} m{2.5cm} m{4cm} m{2cm} }
\textbf{Observer} & \textbf{Order} & \textbf{Communication} & \textbf{Time domain} & \textbf{Leader's input type} & \textbf{Meas. noise} \\ 
\hline
\cite{trujillo2020autonomous}           & second & state  & continuous & bounded unknown & no \\ 
\cite{ajwad2021}           & second & state  & continuous & no input        & no \\ 
\cite{zhao2013}            & second & scalar output & continuous & no input        & no \\
\cite{hong2008,zhang2014,cheng2013}            & second & scalar output & continuous & globally known          & no\\
\cite{hu2010}              & second & scalar output & continuous & globally known          & Gaussian\\
\cite{ni2021,miao2018distributed,li2011finite}              & second & state  & continuous & bounded unknown & no \\

\cite{zhao2016}       & arbitrary & scalar output    & continuous         & no input        & no \\
\cite{cao2015,Li2018b}        & arbitrary & state     & continuous         & no input        & no \\ 

\cite{shi2015,zuo2017fixed,li2020finite,ghommam2021,wang2022,zhang2023,wang2024}   & arbitrary & state     & continuous         & bounded unknown & no \\

\cite{wang2022a}      & arbitrary & state     & event-triggered    & bounded unknown & no \\
\cite{chang2022}      & arbitrary & scalar output    & event-triggered    & no input        & no \\
\cite{xu2013}         & arbitrary & scalar output    & sampled-data       & globally known          & no \\
\cite{aldana2022dynamic}& arbitrary& scalar output plus aux. variable &continuous & bounded unknown & no \\
This work& arbitrary& scalar output& cont./sampled-data& bounded unknown& bounded noise
\end{tabular}
}
\caption{Summary of distributed leader observers in the literature. }
\label{tab:methods}
\end{table}

The rest of the paper is organized as follows. Section~\ref{Sec:Main} introduces the distributed observer problem and the proposed algorithm to address it. In Section~\ref{Sec:Proofs}, we present the proof of the main Theorem. Also, we analyze the performance under noisy and discrete measurements of the neighbors' and leader's output. In Section~\ref{Sec:Num}, we illustrate the effectiveness of the proposed algorithm via numerical simulations. Finally, in Section~\ref{Sec:Concl}, we present the conclusions and discuss future work. 

\textbf{Notation: } We write vectors in boldface lower case letters and matrices in boldface upper case letters. Let $\sgn{x}{\alpha}:=|x|^\alpha\text{sign}(x)$ for $\alpha\in\mathbb{R}\setminus\{0\}$ and $\sgn{x}{0}:=\text{sign}(x)$. For an $m$-times differentiable signal $y(t)$, $y^{(i)}(t)$, $i\leq m$, denotes its $i$-th time derivative. For a matrix $\mf{A}$, \Elio{$[\mf{A}]_{ij}$ denotes its $i,j$ element}, $\mf{A}^\top$ denotes the matrix transpose. Given a square matrix $\mf{A}$, $\lambda_{\max}(\mf{A})$ denotes the largest singular value of $\mf{A}$, and $\mf{A}\succ \mf{0}$ denotes that $\mf{A}$ is a positive definite matrix.
Moreover, if $f:\mathbb{R}\to\mathbb{R}$ and $\mf{v}=[v_1,\dots,v_\mfs{N}]^\mfs{T}\in\mathbb{R}^\mfs{N}$, then we write $f(\mf{v}):=[f(v_1),\dots,f(v_\mfs{N})]^\top$, in particular, such notation is used for $\sgn{\mf{v}}{\alpha}\in\mathbb{R}^{\mfs{N}}$ and element-wise absolute value $|\mf{v}|\in\mathbb{R}^{\mfs{N}}$. Furthermore, for two vectors $\mf{v},\mf{w}\in\mathbb{R}^\mfs{N}$, $\mf{v}\odot\mf{w}:=\text{diag}(\mf{v})\mf{w}\in\mathbb{R}^\mfs{N}$ denotes the element-wise multiplication.
Denote with \Elio{$\|\mf{v}\|:=\sqrt{\mf{v}^\top \mf{v}}$} and
$
    \llbracket\mf{v}\rrbracket^\alpha := \sum_{i=1}^\mfs{N}|v_i|^\alpha
$
for arbitrary vector $\mf{v}=[v_1,\dots,v_\mfs{N}]^\top$ and $\alpha\geq 0$, as well as $\normm{\mf{v}}{\alpha}:=(\homm{\mf{v}}{\alpha})^{1/\alpha}$. Moreover, $\mathds{1}$ is a vector of ones of appropriate dimension. For a natural scalar $v$, $v!$ denotes its factorial.

\Elio{We consider a team of $\mfs{N}$ agents, connected in a communication network modeled by an undirected graph $\mathcal{G}=(\mathcal{I},\mathcal{E})$ with vertex set $\mathcal{I}=\{1,\dots,\mfs{N}\}$ and edge set $\mathcal{E}\subseteq\mathcal{I}\times \mathcal{I}$ with $(i,i)\notin\mathcal{E}$. For an agent $i\in\mathcal{I}$, we say that an agent $j\in\mathcal{I}$ is its neighbor, if $(i,j)\in\mathcal{E}$, and we denote its neighbor set by $\mathcal{N}_i=\{j:(i,j)\in \mathcal{E}\}$. 
For a graph $\mathcal{G}$, $\mf{A}$ denotes its adjacency matrix with components $[\mf{A}]_{ij}=1$ if $(i,j)\in\mathcal{E}$ and $[\mf{A}]_{ij}=0$ otherwise, and 
$\mf{L}=\text{diag}(\mf{A}\mathds{1}) - \mf{A}$ denotes its Laplacian matrix.}

\section{Main result}
\label{Sec:Main}
\subsection{Problem Statement: Distributed leader-observer proposal}
Denote with $u(t)\in\mathbb{R}$ the leader's output signal satisfying
$
|u^{(m+1)}(t)|\leq L
$
with known $L\geq 0$ and $m\in\mathbb{N}$. We consider a team of $\mfs{N}$ agents (followers), connected in a communication network modeled by an \textit{undirected graph} $\mathcal{G}$. Each agent $i\in\mathcal{I}$ stores its $m+1$ state variables ${x}_{i,0}(t),\dots,{x}_{i,m}(t)\in\mathbb{R}$, \Elio{and has access to the continuous-time signal ${x}_{j,0}(t)$ of each of its neighbors $j\in\mathcal{N}_i$}. \Elio{Since the communication graph is undirected, agent $j$ also has access to the continuous-time signal ${x}_{i,0}(t)$ from the $i$-th agent}. Moreover, a subset of agents $\mathcal{L}\subseteq\mathcal{I}$ additionally has access to the continuous-time output $u(t)$ of the leader, \Elio{but \textit{the agents have no information on who the leader is}}.

The aim is to design a dynamic algorithm such that, for each $i\in\mathcal{I}$ and every $\mu=0,\dots,m$, $x_{i,\mu}(t)$ converges exactly to the value of the leader's $\mu$-th derivative $u^{(\mu)}(t)$, effectively performing what is called \emph{exact distributed differentiation} in this work. 

To this aim, using the available information, the state variables of each agent are updated according to:
\begin{equation}
\label{eq:main}
\begin{array}{rl}
\dot{x}_{i,\mu} &= {x}_{i,\mu+1}-k_\mu \tilde{L}^{\frac{\mu+1}{m+1}}\sgn{\displaystyle\sum_{j\in\mathcal{N}_i}({x}_{i,0}-{x}_{j,0})+b_i({x}_{i,0}-u)}{\frac{m-\mu}{m+1}}\ \ \text{for\ }  \mu=0,\dots,m-1 \\
\dot{x}_{i,m} &= -k_m\tilde{L}\sgn{\displaystyle\sum_{j\in\mathcal{N}_i}({x}_{i,0}-{x}_{j,0})+b_i({x}_{i,0}-u)}{0} \\
\end{array}
\end{equation}
where $b_i=1$ if $i\in\mathcal{L}$, i.e. if agent $i$ has access to the leader's output, and $b_i=0$, otherwise. Moreover, $\tilde{L}=\mfs{N}L\lambda_{\max}(\mf{H}^{-1}\mf{B})$ where $\mf{H}=\mf{L}+\mf{B}$, where $\mf{B}=\text{diag}(b_1,\dots,b_\mfs{N})$. As established in Lemma \ref{le:H} of Appendix \ref{sec:aux}, $\mf{H}$ is ensured to have an inverse under the following assumption:

\begin{assumption}
\label{as:graph}
The graph $\mathcal{G}$ is undirected, fixed, and connected. Moreover, at least one follower agent has access to the leader.
\end{assumption}

A graph is undirected if $(i,j)\in\mathcal{E}$ implies that $(j,i)\in\mathcal{E}$; fixed if the vertex set $\mathcal{I}$ and edge set $\mathcal{E}$ do not change during the evolution of the experiment; and connected if for every pair of distinct vertices, $i,j\in\mathcal{I}$, there is a sequence of vertices from $i$ to $j$ such that consecutive vertices are neighbors. \Arxiv{The undirected graph assumption is reasonable for an observer of the form \eqref{eq:main} since the peer-to-peer interactions occur through the error terms $x_{i,0}-x_{j,0}$ which arise from a virtual system rather than from physical sensor information. Henceforth, a communication system is required for sharing the output $x_{i,0}$ between neighbors. In this setting, most modern communication protocols enable bi-directional communication, making it feasible to implement \eqref{eq:main}.}

Given that the proposed algorithm has discontinuous right-hand sides, in this paper the solutions are understood in Filippov's sense \cite[Page 85]{Filippov1988DifferentialSides}.

\subsection{Main Theorem}
To establish the main result, we define an appropriate error variable, relating it to the estimation goal. Denote with $\mf{x}_\mu(t)=[{x}_{1,\mu}(t),\dots,{x}_{\mfs{N},\mu}(t)]^\top$ and write \eqref{eq:main} in vector form as:
\begin{equation}
\label{eq:main:vector}
\begin{array}{rl}
\dot{\mf{x}}_{\mu} &= \mf{x}_{\mu+1}-k_\mu \tilde{L}^{\frac{\mu+1}{m+1}}\sgn{\mf{L}\mf{x}_0+\mf{B}(\mf{x}_{0}-\mathds{1}u)}{\frac{m-\mu}{m+1}}\quad \quad  \text{for\ }  \mu=0,\dots,m-1 \\[0.5em]
\dot{\mf{x}}_{m} &= -k_m\tilde{L}\sgn{\mf{L}\mf{x}_0+\mf{B}(\mf{x}_{0}-\mathds{1}u)}{0}.
\end{array}
\end{equation}

Define the error variables 
$$
\mf{e}_\mu(t) = \mf{x}_\mu(t)-\mf{H}^{-1}\mf{B}\mathds{1}u^{(\mu)}(t) \quad \quad \text{for\ }  \mu=0,\dots,m-1.
$$
\begin{proposition}
\label{prop:equilibrium}
    Let Assumption \ref{as:graph} hold. Assume that $\mf{e}_\mu(t)=\mf{0}$ for all $\mu\in\{0,\dots,m\}$ at a given $t\in\mathbb{R}$. Then, it follows that $x_{i,\mu}(t)=u^{(\mu)}(t)$.
\end{proposition}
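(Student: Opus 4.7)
The plan is very short because the proposition is essentially a linear-algebraic consistency check showing that the chosen equilibrium target $\mf{H}^{-1}\mf{B}\mathds{1}u^{(\mu)}(t)$ actually corresponds to every component being equal to $u^{(\mu)}(t)$. The only graph-theoretic fact I need is that the Laplacian of any undirected graph has $\mathds{1}$ in its kernel, i.e.\ $\mf{L}\mathds{1}=\mf{0}$, which is immediate from the definition $\mf{L}=\text{diag}(\mf{A}\mathds{1})-\mf{A}$ given in the preliminaries. Invertibility of $\mf{H}=\mf{L}+\mf{B}$ under Assumption \ref{as:graph} is granted by the auxiliary Lemma \ref{le:H}.

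First I would rewrite the hypothesis $\mf{e}_\mu(t)=\mf{0}$ as $\mf{x}_\mu(t)=\mf{H}^{-1}\mf{B}\mathds{1}\,u^{(\mu)}(t)$ for each $\mu\in\{0,\dots,m\}$. The goal reduces to showing that $\mf{H}^{-1}\mf{B}\mathds{1}=\mathds{1}$, because then $\mf{x}_\mu(t)=\mathds{1}\,u^{(\mu)}(t)$ and reading off components gives exactly $x_{i,\mu}(t)=u^{(\mu)}(t)$.

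To prove $\mf{H}^{-1}\mf{B}\mathds{1}=\mathds{1}$, I apply $\mf{H}$ to $\mathds{1}$ on the left and use $\mf{L}\mathds{1}=\mf{0}$ to obtain
\[
\mf{H}\mathds{1}=(\mf{L}+\mf{B})\mathds{1}=\mf{L}\mathds{1}+\mf{B}\mathds{1}=\mf{B}\mathds{1}.
\]
Since $\mf{H}$ is invertible by Lemma \ref{le:H}, multiplying both sides by $\mf{H}^{-1}$ yields $\mathds{1}=\mf{H}^{-1}\mf{B}\mathds{1}$, completing the argument.

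There is really no main obstacle here: the non-trivial ingredients (the kernel property of $\mf{L}$ and the invertibility of $\mf{H}$) are standard and have already been stated. The role of this proposition is to justify semantically that driving the error variables $\mf{e}_\mu$ to zero actually achieves the stated distributed differentiation objective $x_{i,\mu}=u^{(\mu)}$, rather than only matching some weighted average of the leader's derivative; the short calculation above makes this precise.
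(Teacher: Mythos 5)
Your proof is correct, and it reaches the same key identity as the paper, namely $\mf{H}^{-1}\mf{B}\mathds{1}=\mathds{1}$, but verifies it by a slightly different and more economical route. The paper argues in the reverse direction: it considers the fixed-point equation $\mf{H}^{-1}\mf{B}\mf{v}=\mf{v}$, rewrites it as $\mf{L}\mf{v}=\mf{0}$, and then invokes connectedness of $\mathcal{G}$ to conclude that any such $\mf{v}$ is a multiple of $\mathds{1}$ (citing the standard fact that the kernel of the Laplacian of a connected graph is spanned by $\mathds{1}$). You instead verify the identity directly by computing $\mf{H}\mathds{1}=(\mf{L}+\mf{B})\mathds{1}=\mf{B}\mathds{1}$, which uses only the elementary fact $\mf{L}\mathds{1}=\mf{0}$ (valid for any undirected graph, connected or not) together with the invertibility of $\mf{H}$ from Lemma \ref{le:H}. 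Your version is marginally cleaner: it needs $\mathds{1}$ to lie in the kernel of $\mf{L}$ rather than to span it, so the only place connectedness (and the existence of at least one agent with leader access) enters is through the invertibility of $\mf{H}$, which both proofs require anyway. The conclusion $\mf{x}_\mu(t)=\mathds{1}u^{(\mu)}(t)$ and the component-wise reading are identical in both arguments.
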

\begin{proof}
    First, we show that $\mf{H}^{-1}\mf{B}\mathds{1}=\mathds{1}$. To do so, we solve for solutions $\mf{v}\in\mathbb{R}^\mfs{N}$ to the equation $\mf{H}^{-1}\mf{B}\mf{v}=\mf{v}$ which imply
    $\mf{0}=(\mf{B}-\mf{H})\mf{v}=(\mf{B}-\mf{B}-\Elio{\mf{L}})\mf{v}=-\Elio{\mf{L}}\mf{v}$. Hence, $\mf{v}$ is an eigenvector of $\Elio{\mf{L}}$ \Elio{with eigenvalue $0$}, which must be a multiple of $\mathds{1}$ since $\mathcal{G}$ is connected \cite[Page 279]{godsil2001}. Now, $\mf{e}_\mu(t)=\mf{0}$ imply $\mf{x}_\mu=\mf{H}^{-1}\mf{B}\mathds{1}u^{(\mu)}(t)=\mathds{1}u^{(\mu)}(t)$, completing the proof.
\end{proof}
Under the light of Proposition \ref{prop:equilibrium}, we aim to show that the equality $\mf{e}_\mu(t)=0$ is reached and maintained in finite time, enabling \eqref{eq:main} to perform exact distributed differentiation. To establish such result, the dynamics of $\mf{e}_\mu(t)$ are written as:
\begin{equation}
\label{eq:main:e}
\begin{array}{rl}
\dot{\mf{e}}_{\mu} &= \mf{e}_{\mu+1}-k_\mu \tilde{L}^{\frac{\mu+1}{m+1}}\sgn{\mf{H}\mf{e}_0}{\frac{m-\mu}{m+1}}\quad\quad\quad\quad\text{for\ }  \mu=0,\dots,m-1 \\[0.5em]
\dot{\mf{e}}_{m} &=-\mf{H}^{-1}\mf{B}\mathds{1}u^{(m+1)}(t) -k_m\tilde{L}\sgn{\mf{H}\mf{e}_0}{0},
\end{array}
\end{equation}
where we used the fact that
$
\mf{L}\mf{x}_0+\mf{B}(\mf{x}_{0}-\mathds{1}u) = \mf{H}(\mf{x}_0 - \mf{H}^{-1}\mf{B}\mathds{1}u) = \mf{H}\mf{e}_0.
$
Moreover, it follows that $-\mf{H}^{-1}\mf{B}\mathds{1}u^{(m+1)}(t)\in[-\tilde{L},\tilde{L}]^\mfs{N}$ given \Elio{$\mathds{1}u^{(m+1)}(t)\in[-{L},{L}]^\mfs{N}$}. 

\begin{definition} 
Let $\mf{e}=[\mf{e}_0^\top\ \ \cdots \ \ \mf{e}_m^\top]^\top$ and let $\mf{e}(t;\mf{e}(0))$ be the solution of \eqref{eq:main:e} with initial condition $\mf{e}(0)$. Then, the origin of system \eqref{eq:main:e} is said to be \textit{Lyapunov stable} if for all $\epsilon>0$, there is $\delta:=\delta(\epsilon)>0$ such that for all $\|\mf{e}(0)\|<\delta$, any solution $\mf{e}(t;\mf{e}(0))$ of~\eqref{eq:main:e} exists for all $t\geq0$, and $\|\mf{e}(t;\mf{e}(0))\|<\epsilon$ for all $t\geq0$. Moreover, the origin of \eqref{eq:main:e} is said to be \textit{globally finite-time stable} if it is Lyapunov stable, and for any $\mf{e}(0)\in\mathbb{R}^{\mfs{N}(m+1)}$, there exists $0\leq T < +\infty$ such that $\mf{e}(t;\mf{e}(0))=0$ holds for all $t\geq T$. 
\end{definition}

Our main result is the following:
\begin{theorem}
\label{th:main}
    Let Assumption \ref{as:graph} hold. Then, there exist positive parameters $\{k_\mu\}_{\mu=0}^m$ (sufficiently large) such that the origin of system~\eqref{eq:main:e} is finite-time stable. 
    
    In particular:
    
    \begin{itemize}
        \item  if $m=1$ set:
    \begin{equation}
    \label{eq:gains}
    \begin{aligned}
    & k_0>\sup_{\|\mf{z}\|=1} \left(\frac{\gamma_0(\mf{z})}{\eta_0(\mf{z})}\right),\quad k_1>1,
    \end{aligned}
    \end{equation}
    where $\eta_0(\mf{z}), \gamma_0(\mf{z})$ are defined subsequently in \eqref{eq:eta:0}.

    \item If $m> 1$ then the gains can be chosen recursively as $k_\mu=\tilde{k}_\mu k_{\mu-1}^{\frac{m-\mu}{m-(\mu-1)}}$ for $\mu=1,\dots,m$ and $\{\tilde{k}_\mu\}_{\mu=1}^m$  taken from \eqref{eq:levant_recursive} for convergence of a RED \cite{levant2003}, and with sufficiently large $\tilde{k}_0=k_0>0$.
    \end{itemize}
\end{theorem}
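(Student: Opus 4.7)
The plan is to prove finite-time stability of the origin of \eqref{eq:main:e} by constructing a strict Lyapunov function for the base case $m=1$ and then extending to higher $m$ via the weighted-homogeneous cascade structure inherited from Levant's scalar RED.

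For $m=1$, the error dynamics form a vector-valued super-twisting algorithm in which the switching function acts on $\mf{H}\mf{e}_0$ rather than componentwise on $\mf{e}_0$. I would introduce a strict Lyapunov function $V(\mf{e}_0,\mf{e}_1)$ of the quadratic-plus-sign-power type standard in super-twisting analysis, built so that it is weighted-homogeneous with weights matching the system. Computing the Filippov set-valued derivative and using the componentwise bound $-\mf{H}^{-1}\mf{B}\mathds{1}u^{(2)}(t)\in[-\tilde{L},\tilde{L}]^\mfs{N}$ already built into the definition of $\tilde{L}$, the derivative estimate can be organized as
\[
\dot{V}(\mf{e}) \leq \tilde{L}^{1/2}\bigl(\gamma_0(\mf{z})-k_0\,\eta_0(\mf{z})\bigr)\,\phi(\mf{e}),
\]
for a positive homogeneous factor $\phi$, where $\mf{z}$ is the normalization of $(\mf{e}_0,\mf{e}_1)$ onto the unit sphere and $\eta_0,\gamma_0$ are continuous functions on the sphere (to be defined in \eqref{eq:eta:0}) isolating, respectively, the stabilizing contribution of the $k_0$ injection and the destabilizing contributions of the $\mf{e}_1$-coupling together with the worst-case disturbance. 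The hypothesis $k_0>\sup_{\|\mf{z}\|=1}(\gamma_0/\eta_0)$ together with $k_1>1$ then renders $\dot V$ strictly negative off the origin, and a standard homogeneous Lyapunov argument upgrades this to finite-time convergence.

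For $m>1$, I would invoke the weighted-homogeneous cascade structure. The drift of \eqref{eq:main:e}, after accounting for the disturbance bound absorbed in $\tilde{L}$, is weighted-homogeneous of degree $-1$ with weights $r_\mu=(m+1-\mu)/(m+1)$, exactly matching Levant's RED scaling. Using the recursive gain selection $k_\mu=\tilde{k}_\mu k_{\mu-1}^{(m-\mu)/(m-\mu+1)}$ from \eqref{eq:levant_recursive}, the two highest coordinates $(\mf{e}_{m-1},\mf{e}_m)$ form a vector super-twisting block whose effective perturbation (the coupling from $\mf{e}_{m-2}$) is bounded on bounded sets, so the $m=1$ analysis applies to that block with rescaled constants and yields finite-time convergence of $\mf{e}_m$ and $\mf{e}_{m-1}$. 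Once those coordinates vanish, the remaining subsystem has the same structure but order $m-1$, and the conclusion follows by recursion on $m$, mirroring the scalar Levant cascade in \cite{levant2003}.

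The principal difficulty lies in the $m=1$ step: because the element-wise nonlinearity $\sgn{\mf{H}\mf{e}_0}{\alpha}$ does not commute with the diagonalization of $\mf{H}$, one cannot decouple \eqref{eq:main:e} into independent scalar super-twisting copies, and standard scalar super-twisting Lyapunov constructions must be genuinely generalized to the matrix-coupled setting. This is precisely what forces the introduction of the sphere-restricted quantities $\eta_0(\mf{z}),\gamma_0(\mf{z})$ and the verification that $\sup_{\|\mf{z}\|=1}\gamma_0(\mf{z})/\eta_0(\mf{z})$ is finite and positive; both facts rely on the positive-definiteness of $\mf{H}$ established in Lemma \ref{le:H} under the connectivity and leader-access parts of Assumption \ref{as:graph}.
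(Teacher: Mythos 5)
Your $m=1$ sketch is essentially the paper's route (a weighted-homogeneous Lyapunov function of super-twisting type, a derivative estimate of the form $-k_0\eta_0(\mf{z})+\gamma_0(\mf{z})$ on the unit sphere, and homogeneity to upgrade asymptotic to finite-time stability), although you gloss over a point the paper must handle explicitly: the finiteness of $\sup_{\|\mf{z}\|=1}\gamma_0(\mf{z})/\eta_0(\mf{z})$ is not automatic, because $\eta_0$ vanishes on the set $\{\mf{H}\mf{z}_0=\sgn{\mf{z}_1}{2}\}$, and one needs $\gamma_0<0$ there. The paper secures this by a second, nested application of the gain lemma to choose the free parameter $h$ in $V(\mf{z})=\tfrac12\mf{z}_0^\top\mf{H}\mf{z}_0-\mf{z}_0^\top\sgn{\mf{z}_1}{2}+\tfrac{h}{4}\homm{\mf{z}_1}{4}$ (together with $h>2\lambda_{\max}(\mf{H}^{-1})$ for positive definiteness and $k_1>1$ to make the restricted term sign-definite); without designing that degree of freedom, the hypothesis on $k_0$ in the theorem could be vacuous.

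The genuine gap is in your $m>1$ argument. You propose that $(\mf{e}_{m-1},\mf{e}_m)$ forms a vector super-twisting block perturbed only by the coupling from $\mf{e}_{m-2}$, converges in finite time, and that the problem then reduces to order $m-1$ by recursion. This fails structurally: in \eqref{eq:main:e} every correction term, including those in the $\dot{\mf{e}}_{m-1}$ and $\dot{\mf{e}}_m$ equations, is driven by $\sgn{\mf{H}\mf{e}_0}{\cdot}$, i.e.\ by the \emph{top} coordinate, not by $\mf{e}_{m-1}$ or $\mf{e}_m$ themselves. Hence the last two coordinates do not close a super-twisting loop in their own error, and $\mf{e}_m$ cannot be driven to zero (nor kept there against the unmatched disturbance $-\mf{H}^{-1}\mf{B}\mathds{1}u^{(m+1)}$) before a sliding regime in $\mf{H}\mf{e}_0$ is established; convergence cannot be argued from the bottom of the chain upward. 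The paper's actual route is different and is what makes the theorem work: rewrite the normalized inclusion in Levant's recursive form by solving the $(\mu-1)$-th equation for $\mf{H}\mf{z}_0$, so that the lower coordinates constitute $\mfs{N}$ copies of the recursive differentiator \eqref{eq:levant_recursive} with $\dot{\mf{z}}_0(t)$ playing the role of the small input $\theta(t)$; then prove, for sufficiently large $\tilde{k}_0$, that $\|\mf{H}\mf{z}_0\|$ and $\|\dot{\mf{z}}_0\|$ become arbitrarily small after an arbitrarily short time (Lemmas \ref{le:state} and \ref{le:derivative}); invoke Levant's contraction lemma (Lemma \ref{prop:levant_contraction}) to obtain a norm contraction on a finite horizon (Lemma \ref{le:contraction:leader}); and finally iterate this contraction through the homogeneity rescaling \eqref{eq:transformation}, summing the geometric sequence of dwell times to reach zero in finite time. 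None of these steps — the recursive change of form, the quantitative bounds on $\dot{\mf{z}}_0$ justifying the "sufficiently large $k_0$" clause, and the contraction-plus-dilation iteration — is supplied or replaced by your cascade reduction, so the $m>1$ case remains unproven in your proposal.
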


\subsection{\Elio{Proof of the main result}}
\label{Sec:Proofs}
\subsubsection{Proof of Theorem~\ref{th:main} for $m=1$}

For this part, we follow a similar Lyapunov-based proof strategy as in \cite{cruz2019} for the RED. Introduce the normalized error variables
$$
\mf{z}_0=\frac{\mf{e}_0}{\tilde{L}},\quad\mf{z}_1=\frac{\mf{e}_1}{k_{0}\tilde{L}}
$$
with error dynamics:
\begin{equation}
\label{eq:main:m:1}
\begin{array}{rl}
\dot{\mf{z}}_{0} &= -\tilde{k}_0\left(\sgn{\mf{H}\mf{z}_0}{\frac{1}{2}}-\mf{z}_{1}\right)\\
\dot{\mf{z}}_{1} &\in-\tilde{k}_1\left(\sgn{\mf{H}\mf{z}_0}{0}+\frac{1}{k_1}[-1,1]^\mfs{N}\right) \\
\end{array}
\end{equation}
where
$
\tilde{k}_0=k_0, \tilde{k}_1=k_1/k_0
$.
Note that \eqref{eq:main:m:1} is an homogeneous differential inclusion for the state $\mf{z}=[\mf{z}_0^\top,\mf{z}_1^\top]^\top$ with  homogeneity degree $-1$ and weights $\mf{r}=[r_0\mathds{1}^\top,r_1\mathds{1}^\top]$ where $r_0=2, r_1=1$ (See Appendix \ref{sec:homo}).

The proof idea is based on the following Lyapunov function:
\begin{equation}
    \label{eq:lyap:m:1}
    V(\mf{z}) = \frac{1}{2}\mf{z}_0^\top\mf{H}\mf{z}_0 - \mf{z}_0^\top\sgn{\mf{z}_1}{2}+\frac{h}{4}\homm{\mf{z}_1}{4}
\end{equation}
for some $h>0$ to be selected later.
\Rev{First, we show that $V(\mf{z})$ is positive definite. This can be verified since \eqref{eq:lyap:m:1} can be written as $V(\mf{z})=\bm{\psi}(\mf{z})^\top\mf{M}(h)\bm{\psi}(\mf{z})$ with:
$$
\bm{\psi}(\mf{z}) = \begin{bmatrix}
\mf{z}_0 \\ \sgn{\mf{z}_1}{2}
\end{bmatrix}, \ \ \mf{M}(h)=
\begin{bmatrix}
\mf{H}/2 & -\mf{I}_\mfs{N}/2 \\
-\mf{I}_\mfs{N}/2 & h\mf{I}_\mfs{N}/4
\end{bmatrix}
$$
Note that the positive definiteness of $V(\mf{z})$ is uniquely defined by the positive definiteness of $\mf{M}(h)$. Lemma \ref{le:schur} in Appendix \ref{sec:aux} ensures that $\mf{M}(h)\succ \mf{0}$ if and only if $\mf{H}\succ \mf{0}$ (ensured by Lemma \ref{le:H} in Appendix \ref{sec:aux}) and  
$
h\mf{I}_\mfs{N}/4 - (\mf{H}/2)^{-1}/4\succ 0
$
which is satisfied if $h>2\lambda_{\max}(\mf{H}^{-1})$.}

In addition, note that $V(\mf{z})$ is $\mf{r}$-homogeneous of degree $4$. Now, analyze the evolution of $V$ along the trajectories of \eqref{eq:main:m:1}:
$$
\begin{aligned}
\dot{V} &= \left(\mf{H}\mf{z}_0-\sgn{\mf{z}_1}{2}\right)^\top\dot{\mf{z}}_0 + \left(2\mf{z}_0\odot |\mf{z}_1| + h\sgn{\mf{z}_1}{3}\right)^\top\dot{\mf{z}}_1 \\
&\in -\tilde{{k}}_0\left(\mf{H}\mf{z}_0-\sgn{\mf{z}_1}{2}\right)^\top\left(\sgn{\mf{H}\mf{z}_0}{\frac{1}{2}}-\mf{z}_{1}\right)  -\tilde{k}_1\left(2\mf{z}_0\odot |\mf{z}_1| + h\sgn{\mf{z}_1}{3}\right)^\top\left(\sgn{\mf{H}\mf{z}_0}{0}+\frac{1}{k_1}[-1,1]^\mfs{N}\right) \\
\dot{V}&\leq-\tilde{k}_0\eta_0(\mf{z}) + \gamma_0(\mf{z})
\end{aligned}
$$
with
\begin{equation}
\label{eq:eta:0}
\begin{aligned}
&\eta_0(\mf{z}) = \left(\mf{H}\mf{z}_0-\sgn{\mf{z}_1}{2}\right)^\top\left(\sgn{\mf{H}\mf{z}_0}{\frac{1}{2}}-\mf{z}_{1}\right)\\
    &\gamma_0(\mf{z})=\Elio{\sup_{\bm{\xi}\in[-1,1]^{\mfs{N}}}-}\tilde{k}_1\left(2\mf{z}_0\odot |\mf{z}_1| + h\sgn{\mf{z}_1}{3}\right)^\top\left(\sgn{\mf{H}\mf{z}_0}{0}+\frac{\bm{\xi}}{k_1}\right) 
\end{aligned}
\end{equation}
Note that $\eta_0(\mf{z})\geq 0$ by Lemma \ref{le:ineq} in Appendix \ref{sec:aux} using $\mf{v} = \mf{H}\mf{z}_0, \mf{w}=\sgn{\mf{z}}{2}, \alpha=1/2$. Henceforth, since $\tilde{k}_1$ is picked arbitrarily independent of $\tilde{k}_0$ as stated in the theorem, existence of $\tilde{k}_0$ such that $-\tilde{k}_0\eta_0(\mf{z})+\gamma_0(\mf{z}_0)<0$ is ensured by Lemma \ref{le:gain} in Appendix~\ref{sec:aux} if $\gamma_0(\mf{z})<0$ when $\eta_0(\mf{z})=0$, i.e., when $\mf{Hz}_0=\sgn{\mf{z}_1}{2}$. Denote with 
$
\mathcal{Z}_0 = \{\mf{z}\in\mathbb{R}^{2\mfs{N}}: \mf{Hz}_0=\sgn{\mf{z}_1}{2}\}
$.
Hence, denoting $\gamma_{0}(\mf{z}|\mathcal{Z}_0)$ the value of $\gamma_0(\mf{z})$ restricted to $\mf{z}\in\mathcal{Z}_0$ leads to:
$$
\begin{aligned}
&\gamma_0(\mf{z}|\mathcal{Z}_0)=\Elio{\sup_{\bm{\xi}\in[-1,1]^{\mfs{N}}}-}\tilde{k}_1\left(2(\mf{H}^{-1}\sgn{\mf{z}_1}{2})\odot |\mf{z}_1| + h\sgn{\mf{z}_1}{3}\right)^\top\left(\sgn{\mf{z}_1}{0}+\frac{\bm{\xi}}{k_1}\right) 
\end{aligned}
$$
Note that:
\begin{equation}
\label{eq:last:state}
\begin{aligned}
&-\left(\sgn{\mf{z}_1}{\alpha}\right)^\top\left(\sgn{\mf{z}_1}{0}+\frac{\bm{\xi}}{k_1}\right) = -\homm{\mf{z}_1}{\alpha} -\frac{1}{k_1}\sum_{i=1}^\mfs{N}\sgn{z_{i,1}}{\alpha}\xi_i \\
&\leq -\homm{\mf{z}_1}{\alpha} + \frac{1}{k_1}\sum_{i=1}^\mfs{N}|z_{i,1}|^\alpha|\xi_i| \leq -\homm{\mf{z}_1}{\alpha}+ \frac{1}{k_1}\sum_{i=1}^\mfs{N}|z_{i,1}|^\alpha  = -\homm{\mf{z}_1}{\alpha}\left(1-\frac{1}{k_1}\right)
\end{aligned}
\end{equation}
for $\alpha=3$ and $\bm{\xi}\in[-1,1]^\mfs{N}$.
Henceforth, denote with:
\begin{equation}
\label{eq:eta:1}
\begin{aligned}
    &\eta_1(\mf{z}_1) = \homm{\mf{z}_1}{\alpha}\left(1-\frac{1}{k_1}\right) \\
    &\gamma_1(\mf{z}_1) = \Elio{\sup_{\bm{\xi}\in[-1,1]^{\mfs{N}}}-}\left(2(\mf{H}^{-1}\sgn{\mf{z}_1}{2})\odot |\mf{z}_1|\right)^\top\left(\sgn{\mf{z}_1}{0}+\frac{\bm{\xi}}{k_1}\right) 
\end{aligned}
\end{equation}
such that:
$$
\gamma_0(\mf{z}|\mathcal{Z}_0) \leq \tilde{k}_1\left(-h\eta_1(\mf{z}_1) + \gamma_1(\mf{z}_1)\right)
$$
Recall that $k_1>1$ such that $\eta_1(\mf{z}_1)>0$ except at $\mf{z}_1=\mf{0}$. Henceforth, applying Lemma \ref{le:gain} in Appendix \ref{sec:aux} again, it is concluded that there exist $h^*\in\mathbb{R}$ such that $-h\eta_1(\mf{z}_1)+\gamma_1(\mf{z}_1)<0$ for all $h>h^*$. Therefore, picking \Elio{$h>\max\{2\lambda_{\max}(\mf{H}^{-1}),h^*\}$} with

    $$
    h^* = \sup_{\|\mf{z}_1\|=1}\left(\frac{\gamma_1(\mf{z}_1)}{\eta_1(\mf{z}_1)}\right)
    $$
    where $\eta_1(\mf{z}_1), \gamma_1(\mf{z}_1)$
ensures both that $V$ is positive definite, radially unbounded and $\dot{V}<0$, establishing global asymptotic stability. This property in addition to homogeneity of \eqref{eq:main:m:1} allows to conclude finite time stability.

\subsubsection{Proof of Theorem~\ref{th:main} for arbitrary $m>1$}
\label{sec:arbitrary}
For this part, we follow a similar homogeneity-based proof strategy as in \cite{levant2003} for the RED. First, introduce the normalized error variables
$$
\mf{z}_\mu=\frac{\mf{e}_\mu}{\tilde{L}}, \mu=0,\dots,m
$$
with error dynamics:
\begin{equation}
\label{eq:main:z}
\begin{array}{rl}
\dot{\mf{z}}_{\mu} &= -{k}_\mu\sgn{\mf{H}\mf{z}_0}{\frac{m-\mu}{m+1}}+\mf{z}_{\mu+1} \quad\quad\quad\quad\text{for\ }  \mu=0,\dots,m-1 \\[0.5em]
\dot{\mf{z}}_{m} &\in-{k}_m\sgn{\mf{H}\mf{z}_0}{0} +[-1,1]^{\mfs{N}}.
\end{array}
\end{equation}
Note that \eqref{eq:main:z} is an homogeneous differential inclusion for the state $\mf{z}=[\mf{z}_0^\top,\dots,\mf{z}_m^\top]^\top$ with  homogeneity degree $-1$ and weights $\mf{r}=[r_0\mathds{1}^\top,\dots,r_m\mathds{1}^\top]$ where $r_\mu=m+1-\mu$ for $\mu=0,\dots,m$. Now, write \eqref{eq:main:z} in recursive form by solving for $\mf{H}\mf{z}_0$ in the $(\mu-1)$-th equation
$$
\mf{H}\mf{z}_0 = k_{\mu-1}^{-\frac{m+1}{m-(\mu-1)}}\sgn{\mf{z}_\mu-\dot{\mf{z}}_{\mu-1}}{\frac{m+1}{m-(\mu-1)}}
$$
and replacing it in the $\mu$-th equation with $\mu\geq 1$:
\begin{equation}
\label{eq:recursive}
\begin{array}{rl}
\dot{\mf{z}}_{0} &= -\tilde{k}_0\sgn{\mf{H}\mf{z}_0}{\frac{m}{m+1}}+\mf{z}_{1}\\[0.5em]

\dot{\mf{z}}_{\mu} &= -\tilde{k}_\mu\sgn{\mf{z}_\mu-\dot{\mf{z}}_{\mu-1}}{\frac{m-\mu}{m-(\mu-1)}}+\mf{z}_{\mu+1}\quad\quad\quad \text{for\ }  \mu=1,\dots,m-1 \\[0.5em]

\dot{\mf{z}}_{m} &\in-\tilde{k}_m\sgn{\mf{z}_m-\dot{\mf{z}}_{m-1}}{0}+[-1,1]^{\mfs{N}} \\
\end{array}
\end{equation}
where $\tilde{k}_0=k_0,\tilde{k}_\mu := k_\mu k_{\mu-1}^{-\frac{m-\mu}{m-(\mu-1)}}$ for $\mu=1,\dots,m$.
Note that \eqref{eq:recursive} corresponds to the recursive form of $\mfs{N}$ homogeneous differentiators, one per component of $\mf{z}_\mu$. To see this, compare with \eqref{eq:levant_recursive} in Appendix \ref{sec:homodif}, where the components of $\dot{\mf{\mf{z}}}_0(t)$ play the role of $\theta(t)$ in each differentiator. Now, the following technical lemmas formally establish this idea by studying the behavior of ${\mf{z}}_0(t)$ in order to conclude a contraction property for \eqref{eq:recursive}. 
\begin{lemma}
\label{le:zerobound}
Let Assumption \ref{as:graph} hold and consider trajectories of \eqref{eq:recursive}. 
Given any $\delta_{\mfs{hom}}>0,B_{\mf{z}_0}^{t_0},\dots,B_{\mf{z}_m}^{t_0}>0$ satisfying $\|\mf{z}_\mu(t_0)\|\leq B_{\mf{z}_\mu}^{t_0}$ then, there exist $B_{\mf{z}_\mu}>B_{\mf{z}_\mu}^{t_0}$ such that  $\|\mf{z}_\mu(t)\|\leq B_{\mf{z}_\mu}, \forall \Elio{t\in[t_0,t_0+\delta_{\mfs{hom}})}$.
\end{lemma}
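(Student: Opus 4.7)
This is a standard local-boundedness result whose purpose is to set up the subsequent homogeneity-based contraction argument by ensuring that trajectories of \eqref{eq:main:z} (equivalent to \eqref{eq:recursive}) cannot escape to infinity on the finite interval $[t_0,t_0+\delta_{\mfs{hom}})$. My strategy is to bound the right-hand side of \eqref{eq:main:z} by an affine function of $\|\mf{z}\|$ and then invoke Gronwall. I would work directly with \eqref{eq:main:z} rather than the recursive form \eqref{eq:recursive}, since the former has a manifestly sublinear/bounded right-hand side and therefore a cleaner growth estimate.

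The key ingredient is that each term $\sgn{\mf{H}\mf{z}_0}{(m-\mu)/(m+1)}$ has sublinear growth in $\mf{z}_0$: using the power-mean inequality componentwise one gets $\|\sgn{\mf{H}\mf{z}_0}{\alpha}\|\leq \mfs{N}^{(1-\alpha)/2}\lambda_{\max}(\mf{H})^\alpha \|\mf{z}_0\|^\alpha$ for every $\alpha\in[0,1]$. Combining these bounds for $\mu=0,\dots,m-1$ with the trivial bound $\|\dot{\mf{z}}_m\|\leq k_m+\sqrt{\mfs{N}}$ (uniform because the set-valued part of \eqref{eq:main:z} is the bounded set $[-1,1]^{\mfs{N}}$) and with $\|\mf{z}_{\mu+1}\|\leq\|\mf{z}\|$, and then absorbing the sublinear powers via the elementary inequality $s^\alpha\leq 1+s$ valid for $s\geq 0$ and $\alpha\in[0,1]$, yields an estimate of the form $\|\dot{\mf{z}}(t)\|\leq C(1+\|\mf{z}(t)\|)$ almost everywhere, holding for every Filippov selection, where $C$ depends only on $\mfs{N}$, $m$, the already chosen gains $\{k_\mu\}_{\mu=0}^m$, and $\lambda_{\max}(\mf{H})$.

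Next I would note that $\phi(t):=\|\mf{z}(t)\|$ is absolutely continuous along any Filippov solution and satisfies $|\dot\phi(t)|\leq\|\dot{\mf{z}}(t)\|\leq C(1+\phi(t))$ for almost every $t$. A direct application of Gronwall's inequality then yields $\phi(t)+1\leq(\phi(t_0)+1)e^{C(t-t_0)}$, which is finite on $[t_0,t_0+\delta_{\mfs{hom}})$. Since $\phi(t_0)\leq\sum_{\nu=0}^m B_{\mf{z}_\nu}^{t_0}$ and $\|\mf{z}_\mu(t)\|\leq\phi(t)$, choosing for instance $B_{\mf{z}_\mu}:=\max\bigl\{B_{\mf{z}_\mu}^{t_0}+1,\ \bigl(\sum_{\nu=0}^m B_{\mf{z}_\nu}^{t_0}+1\bigr)e^{C\delta_{\mfs{hom}}}\bigr\}$ delivers a bound satisfying the required strict inequality $B_{\mf{z}_\mu}>B_{\mf{z}_\mu}^{t_0}$ for every $\mu$.

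I do not anticipate any substantive obstacle. The only mildly delicate point is that the argument must hold for every Filippov selection rather than a single solution, but this is automatic because the growth bound derived above applies uniformly over all selections from the set-valued map, so the classical Gronwall/local-extension argument carries over unchanged.
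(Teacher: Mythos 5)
Your proof is correct, and it takes a genuinely different route from the paper's. The paper dispatches this lemma in a single sentence: since \eqref{eq:recursive} is an $\mf{r}$-homogeneous differential inclusion of negative degree, it admits no finite-time escape, and the existence of the bounds $B_{\mf{z}_\mu}$ on the compact time window follows from that standard homogeneity fact. You instead prove the no-escape property from scratch: you pass to the vector form \eqref{eq:main:z} (legitimate, since \eqref{eq:recursive} is merely that system rewritten by substitution, so the solution sets coincide), bound each term $\sgn{\mf{H}\mf{z}_0}{\frac{m-\mu}{m+1}}$ sublinearly using the estimate of Lemma~\ref{le:sgn} together with $\|\mf{H}\mf{z}_0\|\le\lambda_{\max}(\mf{H})\|\mf{z}_0\|$, absorb the fractional powers via $s^\alpha\le 1+s$, obtain $\|\dot{\mf{z}}(t)\|\le C(1+\|\mf{z}(t)\|)$ uniformly over all Filippov selections, and close with Gronwall and the usual continuation argument. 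This is a complete and self-contained justification; the only blemishes are cosmetic constants (e.g.\ the bound on $\dot{\mf{z}}_m$ should read $(k_m+1)\sqrt{\mfs{N}}$ rather than $k_m+\sqrt{\mfs{N}}$), which are immaterial since they are absorbed into $C$. The trade-off: the paper's argument is shorter and leverages a known property of nonpositive-degree homogeneous inclusions (forward completeness / absence of finite-time escape), whereas yours is elementary, does not invoke homogeneity at all, and yields explicit growth bounds of the form $(\phi(t_0)+1)e^{C\delta_{\mfs{hom}}}$ — more information than the subsequent contraction argument needs, but obtained at the cost of a longer derivation.
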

\begin{proof}
   Existence of the bounds follow due to the absence of finite-time escapes in \eqref{eq:recursive} as a consequence of homogeneity.
\end{proof}

\begin{lemma}
\label{le:state}
Let Assumption \ref{as:graph} hold. Consider the trajectories of
\begin{equation}
    \dot{\mf{z}}_0 = \mf{d}(t)-\tilde{k}_0\sgn{\mf{H}\mf{z}_0}{\frac{m}{m+1}},
\end{equation}
with bounded disturbance $\|\mf{d}(t)\|\leq B_\mf{d}>0, \forall t\geq t_0$ given any $\delta_{\mf{d}},B_{\mf{H}\mf{z}_0}^\infty>0$ then, there exist $\tilde{k}_{0,\mf{d}}\in\mathbb{R}$ such that if $\tilde{k}_0\geq \tilde{k}_{0,\mf{d}}$, then $\|\mf{Hz}_0(t)\|\leq B_{\mf{H}\mf{z}_0}^\infty$ for all $t\geq t_0+\delta_\mf{d}$. Consequently, given $\tilde{k}_0, \delta_\mf{d}>0$, there exist a constant $c_0>0$  depending on $\delta_{\mf{d}}$ such that $\|\mf{Hz}_0(t)\|\leq c_0 \tilde{k}_0^{-\frac{m+1}{m}}$ for all $t\geq t_0+\delta_\mf{d}$.
\end{lemma}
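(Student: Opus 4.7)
The natural Lyapunov candidate, suggested by $\mf{H}\mf{z}_0$ appearing inside the sign-power nonlinearity, is $V(\mf{z}_0)=\tfrac{1}{2}\mf{z}_0^\top\mf{H}\mf{z}_0$; this is positive definite since $\mf{H}\succ 0$ by Lemma~\ref{le:H}. Writing $\mf{y}:=\mf{H}\mf{z}_0$, a direct computation, Cauchy--Schwarz, and the power-mean inequality $\llbracket\mf{y}\rrbracket^{1+\alpha}\geq\|\mf{y}\|^{1+\alpha}$ (valid for $\alpha:=m/(m+1)\in(0,1)$ since $\|\cdot\|_{1+\alpha}\geq\|\cdot\|_2$ whenever $1+\alpha\leq 2$) yield
$$\dot V \;=\; \mf{y}^\top\mf{d}(t)-\tilde{k}_0\,\llbracket\mf{y}\rrbracket^{1+\alpha} \;\leq\; B_{\mf{d}}\|\mf{y}\|-\tilde{k}_0\,\|\mf{y}\|^{1+\alpha}.$$
Hence $\dot V<0$ whenever $\|\mf{y}\|>(B_{\mf{d}}/\tilde{k}_0)^{(m+1)/m}$, which identifies a positively invariant, attracting ball for $\mf{H}\mf{z}_0$ of radius proportional to $\tilde{k}_0^{-(m+1)/m}$. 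This already encodes the scaling required by the second claim.

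For the first claim I would complement this with a finite-settling-time estimate. Outside a slightly enlarged version of the attracting ball, the above estimate together with $V\leq\tfrac{1}{2}\lambda_{\max}(\mf{H}^{-1})\|\mf{y}\|^2$ gives $\dot V\leq -c_1\,\tilde{k}_0\,V^{\gamma}$ with $\gamma=(2m+1)/(2(m+1))<1$. Integrating this Bernoulli-type inequality from the initial level $V(t_0)$ (finite, since no finite-time escape occurs) produces a settling time of order $V(t_0)^{1-\gamma}/\tilde{k}_0$, which vanishes as $\tilde{k}_0\to\infty$. Choosing $\tilde{k}_{0,\mf{d}}$ sufficiently large therefore simultaneously shrinks the attracting ball below any prescribed $B_{\mf{H}\mf{z}_0}^\infty$ and brings the settling time below any prescribed $\delta_{\mf{d}}$; positive invariance of the ball then keeps $\|\mf{H}\mf{z}_0(t)\|\leq B_{\mf{H}\mf{z}_0}^\infty$ for every $t\geq t_0+\delta_{\mf{d}}$.

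The explicit bound $\|\mf{H}\mf{z}_0(t)\|\leq c_0\tilde{k}_0^{-(m+1)/m}$ of the second claim then follows by specialising the first with $B_{\mf{H}\mf{z}_0}^\infty=c_0\tilde{k}_0^{-(m+1)/m}$ and verifying that the corresponding $\tilde{k}_{0,\mf{d}}$ depends only on $\delta_{\mf{d}}$ and the fixed constants $B_{\mf{d}}$, $\lambda_{\max}(\mf{H}^{-1})$. A conceptually cleaner route is the time rescaling $\tau=\tilde{k}_0(t-t_0)$, which converts the equation into the unit-gain dynamics $d\tilde{\mf{z}}_0/d\tau=\tilde{\mf{d}}(\tau)-\sgn{\mf{H}\tilde{\mf{z}}_0}{m/(m+1)}$ with $\|\tilde{\mf{d}}\|\leq B_{\mf{d}}/\tilde{k}_0$; the same Lyapunov argument on the rescaled system yields an ultimate bound of order $(B_{\mf{d}}/\tilde{k}_0)^{(m+1)/m}$ on $\|\mf{H}\tilde{\mf{z}}_0\|=\|\mf{H}\mf{z}_0\|$ after a $\tau$-transient independent of $\tilde{k}_0$.

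The main obstacle is keeping the exponent $(m+1)/m$ sharp throughout: the power-mean step, the Bernoulli integration, and the relation between $V$ and $\|\mf{y}\|$ must not hide any spurious factor of $\tilde{k}_0$. The $\mf{H}$-weighted choice of $V$ is the key design decision, because it is exactly what makes the nonlinear term collapse cleanly onto $\llbracket\mf{y}\rrbracket^{1+\alpha}$ without any spectral mismatch between $\mf{H}$ and the sign-power vector.
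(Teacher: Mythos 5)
Your proposal is correct and follows essentially the same route as the paper's proof: the same $\mf{H}$-weighted Lyapunov function $V=\tfrac{1}{2}\mf{z}_0^\top\mf{H}\mf{z}_0$, the same power-mean estimate giving $\dot V\leq B_{\mf{d}}\|\mf{H}\mf{z}_0\|-\tilde{k}_0\|\mf{H}\mf{z}_0\|^{\frac{2m+1}{m+1}}$, the same attracting region of radius proportional to $\tilde{k}_0^{-\frac{m+1}{m}}$, and a reaching-time bound that can be made smaller than $\delta_{\mf{d}}$ by enlarging the gain. The only (cosmetic) difference is how the transient is quantified: the paper inserts a margin $\gamma_{\mf{d}}$ into the ball radius to obtain $\dot V\leq-c\gamma_{\mf{d}}\sqrt{2V}$, while you integrate the Bernoulli-type inequality $\dot V\leq-c_1\tilde{k}_0V^{\gamma}$; both carry the same implicit dependence on the initial level $V(t_0)$ and the same mild looseness about invariance of the $\|\mf{H}\mf{z}_0\|$-ball versus the corresponding $V$-sublevel set.
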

\begin{proof}
    Choose a Lyapunov function $V(\mf{z}_0) = \frac{1}{2}\mf{z}_0^\top\mf{H}\mf{z}_0$. Therefore, for any $t\geq t_0$ such that:
    $$
    \|\mf{Hz}_0(t)\|> \left(\frac{B_\mf{d}+\gamma_\mf{d}}{\tilde{k}_0}\right)^\frac{m+1}{m}.
    $$
    given arbitrary $\gamma_\mf{d}>0$ it follows that
    $$
    \begin{aligned}
    \dot{V} &= -\tilde{k}_0\mf{z}_0^\top\mf{H}\sgn{\mf{Hz}_0}{\frac{m}{m+1}} + (\mf{H}\mf{z}_0)^\top\mf{d}(t) \leq -\tilde{k}_0\homm{\mf{Hz}_0}{\frac{m}{m+1}+1}+B_\mf{d}\|\mf{H}\mf{z}_0\|\\
    &\leq -\tilde{k}_0\|{\mf{Hz}_0}\|^{\frac{m}{m+1}+1}+B_\mf{d}\|\mf{H}\mf{z}_0\| =-\|\mf{H}\mf{z}_0\|(\tilde{k}_0\|\mf{Hz}_0\|^\frac{m}{m+1} - B_\mf{d})<-\gamma_\mf{d} \|\mf{H}\mf{z}_0\|\leq-c\gamma_\mf{d} \sqrt{2V}
    \end{aligned}
    $$
    with $c = \sqrt{\lambda_{\min}(\mf{H}^2)/\lambda_{\max}(\mf{H})}$. Therefore, choosing $\tilde{k}_0>\tilde{k}_{0,\mf{d}}:=(B_\mf{d}+\gamma_\mf{d})/\left(B_{\mf{Hz}_0}^\infty\right)^{\frac{m}{m+1}}$ makes $\dot{V}<-c\gamma_\mf{d}\sqrt{2V}$ until the region $\|\mf{H}\mf{z}_0(t)\|\leq B_{\mf{H}\mf{z}_0}^\infty$ is reached after some finite time $\delta_\mf{d}>0$, leaving such region invariant. Moreover, due to $\dot{V}<-c\gamma_\mf{d}\sqrt{2V}$, $\delta_\mf{d}>0$ can be made arbitrarily small by making taking $\gamma_\mf{d}>0$ sufficiently big, and similarly for $\tilde{k}_{0,\mf{d}}$. The last part of the lemma follows by taking $c_0 = (B_\mf{d}+\gamma_\mf{d})^{(m+1)/m}$.
\end{proof}
\begin{lemma}
\label{le:derivative}
Consider the same assumptions as in Lemma \ref{le:state}. Additionally, assume $\|\dot{\mf{d}}(t)\|\leq B_{\dot{\mf{d}}}>0, \forall t\geq 0$. Given any $\delta_{\dot{\mf{d}}}, \overline{\theta}>0$ then, there exist $\tilde{k}_{0,{\dot{\mf{d}}}}\in\mathbb{R}$ such that $\|\dot{\mf{z}}_0(t)\|\leq \overline{\theta}$ for all $t\geq \delta_{\dot{\mf{d}}}+t_0$.
\end{lemma}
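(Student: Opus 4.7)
The plan is to chain two applications of the Lyapunov-type argument underlying Lemma~\ref{le:state}: first I would use Lemma~\ref{le:state} itself to place $\|\mf{H}\mf{z}_0\|$ in an arbitrarily small neighborhood of the origin by enlarging $\tilde{k}_0$, and then exploit the additional bound on $\dot{\mf{d}}$ to run a second Lyapunov argument that tightens the estimate all the way down to $\dot{\mf{z}}_0$. Concretely, I would first fix an auxiliary target radius and invoke Lemma~\ref{le:state} with $\delta_{\mf{d}}:=\delta_{\dot{\mf{d}}}/2$ to obtain a threshold $\tilde{k}_{0,\mf{d}}$ above which
\[
\|\mf{H}\mf{z}_0(t)\|\leq c_0\,\tilde{k}_0^{-(m+1)/m}\qquad\text{for all }t\geq t_0+\delta_{\dot{\mf{d}}}/2.
\]

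Next I would set $\mf{y}(t):=\dot{\mf{z}}_0(t)=\mf{d}(t)-\tilde{k}_0\sgn{\mf{H}\mf{z}_0}{m/(m+1)}$. Where the right-hand side is smooth, component-wise differentiation yields
\[
\dot{\mf{y}} \;=\; \dot{\mf{d}}(t)\;-\;\tilde{k}_0\tfrac{m}{m+1}\,|\mf{H}\mf{z}_0|^{-1/(m+1)}\odot\mf{H}\mf{y},
\]
and the bound from the first step gives the pointwise estimate $|\mf{H}\mf{z}_0|^{-1/(m+1)}\geq c_0^{-1/(m+1)}\tilde{k}_0^{1/m}$, so the effective damping gain is at least of order $\tilde{k}_0^{(m+1)/m}$. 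Taking the Lyapunov candidate $V(\mf{y}):=\mf{y}^\top\mf{H}\mf{y}$, whose quadratic factor in the damping term becomes the symmetric positive-definite matrix $\mf{H}\,\mathrm{diag}(|\mf{H}\mf{z}_0|^{-1/(m+1)})\,\mf{H}$, and bounding the cross term by Cauchy--Schwarz, one gets
\[
\dot{V} \;\leq\; 2B_{\dot{\mf{d}}}\|\mf{H}\mf{y}\| \;-\; 2\tilde{k}_0\tfrac{m}{m+1}\,c_0^{-1/(m+1)}\tilde{k}_0^{1/m}\,\|\mf{H}\mf{y}\|^2,
\]
so $\dot{V}<0$ whenever $\|\mf{H}\mf{y}\|$ exceeds a quantity of order $B_{\dot{\mf{d}}}\,\tilde{k}_0^{-(m+1)/m}$. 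Combined with the trivial a~priori bound $\|\mf{y}\|\leq B_\mf{d}+c_0^{m/(m+1)}$, which follows immediately from Step~1 and the boundedness of $\mf{d}$, this ensures that $\mf{y}$ enters any prescribed ball of radius $\overline{\theta}$ within the remaining time $\delta_{\dot{\mf{d}}}/2$, provided $\tilde{k}_{0,\dot{\mf{d}}}$ is chosen large enough in terms of $\overline{\theta}$, $B_{\dot{\mf{d}}}$, $c_0$ and the spectrum of $\mf{H}$.

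The hard part, genuinely distinct from Lemma~\ref{le:state}, is handling the singularity of $\sgn{\cdot}{m/(m+1)}$ on the zero set $\{\mf{H}\mf{z}_0=\mf{0}\}$: the formal $\mf{y}$-equation above is only valid off this set, so a careful Filippov reading of the Lyapunov derivative is required to control what happens on crossings. An equivalent and perhaps cleaner way to reach the same conclusion is the homogeneity rescaling $\mf{z}_0\mapsto \tilde{k}_0^{(m+1)/m}\mf{z}_0$, $t\mapsto \tilde{k}_0^{1/m}t$, under which the scaled system is a perturbation of a $\tilde{k}_0$-independent nominal dynamics whose disturbance derivative is $O(\tilde{k}_0^{-(m+1)/m})$; the derivative estimate then transfers back to the original time through the scaling law, yielding the required $\overline{\theta}$-bound on $\|\dot{\mf{z}}_0\|$.
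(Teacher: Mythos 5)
Your proof is correct and follows essentially the same route as the paper: differentiate the $\mf{z}_0$-equation so that $\dot{\mf{z}}_0$ obeys a linear time-varying system with diagonal damping, lower-bound that damping using the bound on $\mf{H}\mf{z}_0$ from Lemma~\ref{le:state}, and drive $\|\dot{\mf{z}}_0\|$ below $\overline{\theta}$ via the weighted quadratic Lyapunov function $\mf{y}^\top\mf{H}\mf{y}$ by taking $\tilde{k}_0$ sufficiently large. The only (harmless) differences are that you invoke the sharper estimate $\|\mf{H}\mf{z}_0\|\leq c_0\tilde{k}_0^{-(m+1)/m}$ where the paper uses the fixed bound $B_{\mf{Hz}_0}^{\infty}$, and that you explicitly flag the singularity of the damping term on $\{\mf{H}\mf{z}_0=\mf{0}\}$, a point the paper's proof passes over silently.
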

\begin{proof}
    Denote with $\mf{v} = \dot{\mf{z}}_0$ and with $\mf{J}(t) = \frac{m}{m+1}\text{diag}\left(|\mf{Hz}_0(t)|^{\frac{-1}{m+1}}\right)$. Therefore,
    $$
    \dot{\mf{v}} = \dot{\mf{d}}(t)-\frac{\tilde{k}_0m}{m+1}\left(|\mf{H}\mf{z}_0|^{-\frac{1}{m+1}}\right)\odot\mf{H}\mf{v} = \dot{\mf{d}}(t) - \tilde{k}_0\mf{J}(t)\mf{H}\mf{v}.
    $$
    Moreover, choose $\tilde{k}_0>\tilde{k}_{0,\mf{d}}$ as in Lemma \ref{le:state} such that $\|\mf{w}(t)\|\leq B_{\mf{Hz}_0}^\infty$ for all $t\geq \delta_\mf{d}$ with $\mf{w}(t)=\mf{H}\mf{z}_0(t)$. Hence, the components of $\mf{w}$ satisfy $|w_i|^\frac{1}{m+1}\leq \|\mf{w}\|^\frac{1}{m+1}\leq (B_{\mf{Hz}_0}^\infty)^\frac{1}{m+1}$ and therefore:
    $$
    \mf{J}(t)\succeq \frac{m\left(B_{\mf{Hz}_0}^\infty\right)^{-\frac{1}{m+1}}}{m+1}\mf{I}.
    $$
    
    Choose a Lyapunov function candidate $V(\mf{v}) = \mf{v}^\top\mf{H}\mf{v}/2$ such that:
    $$
    \begin{aligned}
        \dot{V} &= -\tilde{k}_0\mf{v}^\top\mf{H}\mf{J}(t)\mf{H}\mf{v} + \mf{v}^\top\mf{H}\dot{\mf{d}}(t)\leq -\frac{\tilde{k}_0m\left(B_{\mf{Hz}_0}^\infty\right)^{-\frac{1}{m+1}}}{m+1}\|\mf{H}\mf{v}\|^2 + B_{\dot{\mf{d}}}\|\mf{H}\mf{v}\|\\
        &\leq -\frac{\tilde{k}_0\lambda_{\min}(\mf{H})m\left(B_{\mf{Hz}_0}^\infty\right)^{-\frac{1}{m+1}}}{m+1}\|\mf{v}\|^2 + B_{\dot{\mf{d}}}\lambda_{\max}(\mf{H}^2)\|\mf{v}\| \leq -\|\mf{v}\|\left(\frac{\tilde{k}_0\lambda_{\min}(\mf{H})m\left(B_{\mf{Hz}_0}^\infty\right)^{-\frac{1}{m+1}}}{m+1}\|\mf{v}\|-B_{\dot{\mf{d}}}\lambda_{\max}(\mf{H}^2)\|\right)\\
        &< -\gamma_{\dot{\mf{d}}}\|\mf{v}\| \leq -\gamma_{\dot{\mf{d}}}\sqrt{2\lambda_{\max}(\mf{H})^{-1}V}
    \end{aligned}
    $$
    for arbitrary $\gamma_{\dot{\mf{d}}}>0$, and any $t\geq t_0$ such that:
    $$
    \|\mf{v}(t)\|=\|\dot{\mf{z}}_0(t)\| > \frac{(B_{\dot{\mf{d}}}\lambda_{\max}(\mf{H}^2)+\gamma_{\dot{\mf{d}}})(m+1)}{\tilde{k}_0\lambda_{\min}(\mf{H})m\left(B_{\mf{Hz}_0}^\infty\right)^{-\frac{1}{m+1}}}
    $$
    Henceforth, choose 
    $$
    \tilde{k}_{0,\dot{\mf{d}}}=\max\left(\frac{(B_{\dot{\mf{d}}}\lambda_{\max}(\mf{H}^2)+\gamma_{\dot{\mf{d}}})(m+1)\left(B_{\mf{Hz}_0}^\infty\right)^\frac{1}{m+1}}{\lambda_{\min}(\mf{H})m\overline{\theta}},\tilde{k}_{0,\mf{d}}\right)
    $$
    leading to $\dot{V}<-\gamma_{\dot{\mf{d}}}\sqrt{2\lambda_{\max}(\mf{H})^{-1}V}$ until the region $\|\dot{\mf{z}}_0(t)\|\leq \overline{\theta}$ is reached after some finite time $\delta_{\dot{\mf{d}}}>0$, leaving such region invariant. Similarly as before, $\delta_{\dot{\mf{d}}}>0$ can be made arbitrarily small as well by increasing $\tilde{k}_{0,\dot{\mf{d}}}>0$ further.
\end{proof}

\begin{lemma}
\label{le:contraction:leader}
    Let Assumption \ref{as:graph} hold and consider the trajectories of \eqref{eq:recursive}. Then, for any $B_{\mf{z}_\mu}^\infty,B_{\mf{z}_\mu}^{t_0}>0$ with $B_{\mf{z}_\mu}^\infty<B_{\mf{z}_\mu}^{t_0}$, there exists $\tilde{k}_0^*,\dots,\tilde{k}_m^*,{\delta}_\mf{z}>0$ such that if $\tilde{k}_\mu\geq \tilde{k}_\mu^*$ and the initial conditions satisfy $\|\mf{z}_\mu(t_0)\|\leq B_{\mf{z}_\mu}^{t_0}$, then $\|\mf{z}_\mu(t)\|\leq B_{\mf{z}_\mu}^\infty$ for all $t\geq {\delta}_\mf{z}$. 
\end{lemma}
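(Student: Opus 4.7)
The plan is to exploit the recursive structure of \eqref{eq:recursive} and cascade the contraction from the $\mu=0$ block down through $\mu=1,\dots,m$, using Lemmas~\ref{le:zerobound}--\ref{le:derivative} as the core scalar-type contraction tools together with the standard convergence results of Levant's recursive differentiator \cite{levant2003}. The key observation is that once $\dot{\mf{z}}_0(t)$ has been driven to a small uniform bound, equations $\mu=1,\dots,m$ in \eqref{eq:recursive} are precisely $\mfs{N}$ decoupled component-wise copies of the recursive RED driven by the scalar inputs $[\dot{\mf{z}}_0(t)]_i$.

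First, I would fix a time horizon of length $\delta_{\mfs{hom}}>0$ chosen large enough to accommodate all later contraction times, and apply Lemma~\ref{le:zerobound} to obtain uniform bounds $B_{\mf{z}_\mu}$ on each $\mf{z}_\mu(t)$ over $[t_0,t_0+\delta_{\mfs{hom}})$. These bounds, together with the right-hand side of \eqref{eq:recursive}, directly yield a bound $B_\mf{d}$ on $\mf{d}(t):=\mf{z}_1(t)$ viewed as a disturbance in the $\mu=0$ equation, as well as a bound $B_{\dot{\mf{d}}}$ on its derivative. This allows me to invoke Lemma~\ref{le:state} with $\tilde{k}_0$ sufficiently large so that $\|\mf{H}\mf{z}_0(t)\|$, and hence $\|\mf{z}_0(t)\|\leq \lambda_{\max}(\mf{H}^{-1})\|\mf{H}\mf{z}_0(t)\|$, meets $B_{\mf{z}_0}^\infty$ after a finite time $\delta_\mf{d}$, and then Lemma~\ref{le:derivative} to enforce $\|\dot{\mf{z}}_0(t)\|\leq\overline{\theta}$ for any prescribed $\overline{\theta}>0$ after $\delta_{\dot{\mf{d}}}$.

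At this stage, the cascade \eqref{eq:recursive} for $\mu\geq 1$ reduces, component-wise, to the recursive Levant differentiator driven by the uniformly small signal $[\dot{\mf{z}}_0(t)]_i$ (and, for $\mu=m$, with an additional bounded inclusion term). Choosing the gains $\tilde{k}_1,\dots,\tilde{k}_m$ according to the RED recursion from \cite{levant2003} (as prescribed in the statement of Theorem~\ref{th:main}) allows me to invoke the standard finite-time contraction property of the RED on each component, which guarantees that $\|\mf{z}_\mu(t)\|\leq B_{\mf{z}_\mu}^\infty$ for $\mu=1,\dots,m$ after an additional finite time. Taking $\delta_\mf{z}$ as the accumulated contraction time and $\tilde{k}_\mu^*$ as the threshold imposed by each step closes the argument.

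The main obstacle will be the mutual coupling among the uniform bounds, contraction radii, and gain magnitudes: the estimate $B_{\dot{\mf{d}}}$ needed to apply Lemma~\ref{le:derivative} depends on bounds on $\mf{z}_1,\dots,\mf{z}_m$ that are themselves contracted only later in the cascade, so one must be careful that the disturbance bounds remain valid throughout the entire window $[t_0,t_0+\delta_\mf{z}]$. This is resolved by using Lemma~\ref{le:zerobound} to guarantee uniform (possibly crude) boundedness of every $\mf{z}_\mu$ over a single time window covering all subsequent contraction steps, and by selecting the gains sequentially $\tilde{k}_0\to\tilde{k}_1\to\cdots\to\tilde{k}_m$ so that each $\tilde{k}_\mu$ is large enough to produce the required contraction in the $\mu$-th block without invalidating the bounds established at the preceding stages.
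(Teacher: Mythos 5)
Your overall strategy matches the paper's: treat $\mf{z}_1$ as a disturbance for the $\mu=0$ block, use Lemmas~\ref{le:zerobound}, \ref{le:state} and \ref{le:derivative} to drive $\dot{\mf{z}}_0$ below a threshold $\overline{\theta}$, and then view the $\mu=1,\dots,m$ blocks as $\mfs{N}$ copies of the recursive RED with $\dot{\mf{z}}_0$ playing the role of $\theta(t)$, invoking the contraction property of Lemma~\ref{prop:levant_contraction}. However, there is a genuine gap in the step where you claim that the bounds from Lemma~\ref{le:zerobound} ``directly yield'' the derivative bound $B_{\dot{\mf{d}}}$. In the original variables \eqref{eq:main:z} one has $\dot{\mf{z}}_1=-k_1\sgn{\mf{H}\mf{z}_0}{\frac{m-1}{m+1}}+\mf{z}_2$ with $k_1=\tilde{k}_1\tilde{k}_0^{\frac{m-1}{m}}$, so a crude bound on $\dot{\mf{z}}_1$ obtained from $\|\mf{z}_0(t)\|\leq B_{\mf{z}_0}$ grows unboundedly with $\tilde{k}_0$; since Lemma~\ref{le:derivative} asks for $\tilde{k}_0\geq\tilde{k}_{0,\dot{\mf{d}}}$ with $\tilde{k}_{0,\dot{\mf{d}}}$ increasing in $B_{\dot{\mf{d}}}$, this creates a circular dependency that your argument does not break. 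The paper breaks it with a cancellation that your proposal never uses: after a short transient, Lemma~\ref{le:state} gives the quantitative bound $\|\mf{H}\mf{z}_0(t)\|\leq c_0\tilde{k}_0^{-\frac{m+1}{m}}$ with $c_0$ independent of $\tilde{k}_0$, whence
\begin{equation}
\|\dot{\mf{z}}_1\|\leq \tilde{k}_1\tilde{k}_0^{\frac{m-1}{m}}\bigl(c_0\tilde{k}_0^{-\frac{m+1}{m}}\bigr)^{\frac{m-1}{m+1}}+B_{\mf{z}_2}=\tilde{k}_1 c_0^{\frac{m-1}{m+1}}+B_{\mf{z}_2}=:B_{\dot{\mf{d}}},
\end{equation}
which is independent of $\tilde{k}_0$; only then can $\tilde{k}_0$ be enlarged freely without invalidating $B_{\dot{\mf{d}}}$. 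The coupling you flag in your final paragraph (dependence of $B_{\dot{\mf{d}}}$ on the not-yet-contracted $\mf{z}_2$) is the harmless one, already handled by Lemma~\ref{le:zerobound}; the dangerous coupling is through the gain $k_1$ itself.

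Relatedly, your gain-selection order $\tilde{k}_0\to\tilde{k}_1\to\cdots\to\tilde{k}_m$ is backwards and would not close. In Lemma~\ref{prop:levant_contraction} the admissible input level $\overline{\theta}$ is an \emph{output} of the lemma (``sufficiently small''), chosen together with $\tilde{k}_1,\dots,\tilde{k}_m$ once the bounds $B_{\mf{z}_\mu}^{t_0},B_{\mf{z}_\mu}^\infty$ are fixed; you cannot prescribe $\overline{\theta}$ first and then pick those gains to accommodate it. Moreover $B_{\dot{\mf{d}}}$ above depends on $\tilde{k}_1$, so $\tilde{k}_1$ must already be fixed before Lemma~\ref{le:derivative} can dictate how large $\tilde{k}_0$ must be. The paper's order is therefore: fix the time windows, obtain $B_{\mf{z}_\mu}$ from Lemma~\ref{le:zerobound}, obtain $\tilde{k}_1,\dots,\tilde{k}_m$ and $\overline{\theta}$ from Lemma~\ref{prop:levant_contraction}, form the $\tilde{k}_0$-independent bounds $B_{\mf{d}}$ and $B_{\dot{\mf{d}}}$, and only at the end choose $\tilde{k}_0$ via Lemmas~\ref{le:state} and \ref{le:derivative}. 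With these two corrections (the cancellation making $B_{\dot{\mf{d}}}$ gain-independent, and the reversed ordering of the gain choices) your argument becomes the paper's proof.
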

\begin{proof}
    We follow a similar procedure as the one outlined in \cite[Lemma 13]{aldana2021}. For clarity in this proof, we divide this procedure in two parts. First, we provide some parameter definitions.

    Choose arbitrary $\delta_{\mfs{hom}}>0$. Set $\delta_{\infty}\in(0,\delta_{\mfs{hom}})$ and $\delta_{\mf{d}}=\delta_{\dot{\mf{d}}}\in(0,(\delta_{\mfs{hom}}-\delta_\infty)/2)$. Given $\delta_{\mfs{hom}}$ and  $B_{\mf{z}_\mu}^{t_0}$, set $B_{\mf{z}_0},\dots,B_{\mf{z}_m}$ as in Lemma \ref{le:zerobound}. Lemma \ref{le:zerobound} ensures bounds of the form $\|\mf{z}_\mu(\delta_{\mf{d}}+\delta_{\dot{\mf{d}}})\|\leq B_{\mf{z}_\mu}^{t_0=\delta_{\mf{d}}+\delta_{\dot{\mf{d}}}}>0$.
        These are used as input bounds for the initial conditions at $t_0=\delta_\mf{d}+\delta_{\dot{\mf{d}}}$ in Lemma \ref{prop:levant_contraction}, along with $B_{\mf{z}_\mu}^\infty$. Then, set $\tilde{k}_1,\dots,\tilde{k}_m, \overline{\theta}>0$ as the ones resulting from Lemma \ref{prop:levant_contraction}.  Setting $B_\mf{d}:=B_{\mf{z}_1}$, choose $c_0$ as in Lemma \ref{le:state}.  Moreover, choose $$B_{\dot{\mf{d}}} = \tilde{k}_1c_0^{\frac{m-1}{m+1}} + B_{\mf{z}_2}.$$ Set $B_{\mf{Hz}_0}^\infty=B_{\mf{z}_0}^\infty/\lambda_{\min}(\mf{H})^2$ such that $\|\mf{Hz}_0\|\leq B_{\mf{Hz}_0}^\infty$ imply $\|\mf{z}_0\|\leq B_{\mf{z}_0}^\infty$. Given $\delta_\mf{d}=\delta_{\dot{\mf{d}}}, \overline{\theta}, B_{\mf{Hz}_0}^\infty$ as defined before, set $\tilde{k}_{0,\dot{\mf{d}}}$ as in Lemma \ref{le:derivative}.

    Now, with these definitions, we proceed with the following steps. First, note that Lemma \ref{le:zerobound} ensures $\|\mf{z}_\mu(t)\|\leq B_{\mf{z}_\mu}$ for all $t\in[0,\delta_{\mfs{hom}}]$. Set $\mf{d}(t):=\mf{z}_1(t)$. This ensures $\|\mf{d}(t)\|\leq B_\mf{d}$  for all $t\in[\delta_\mf{d},\delta_{\mfs{hom}}]$. During $t\in[\delta_\mf{d},\delta_{\mfs{hom}}]$, Lemma \ref{le:state} ensures that $\|\mf{Hz}_0(t)\|\leq c_0\tilde{k}_0^{-\frac{m+1}{m}}$. This fact, in addition to $k_1=\tilde{k}_1 \tilde{k}_0^{\frac{m-1}{m}}$ allows to conclude:
        $$
        \begin{aligned}
        \|\dot{\mf{d}}(t)\|=\|\dot{\mf{z}}_1(t)\|&\leq {k}_1\|\sgn{\mf{H}\mf{z}_0}{\frac{m-1}{m+1}}\|+\|\mf{z}_{2}\| \leq \tilde{k}_1 \tilde{k}_0^{\frac{m-1}{m}} \|\mf{Hz}_0\|^{\frac{m-1}{m+1}} + B_{\mf{z}_2} \\
        &\leq \tilde{k}_1 \tilde{k}_0^{\frac{m-1}{m}} \left(c_0\tilde{k}_0^{-\frac{m+1}{m}} \right)^{\frac{m-1}{m+1}} + B_{\mf{z}_2} = \tilde{k}_1c_0^{\frac{m-1}{m+1}} + B_{\mf{z}_2} = B_{\dot{\mf{d}}}
    \end{aligned}
    $$
    for all $t\in[\delta_{\mf{d}},\delta_{\mfs{hom}}]$. As a result of the previous arguments points, and the definitions made so far including $\bar{\theta}$, Lemmas \ref{le:state} and \ref{le:derivative} ensures that $\|\mf{z}_0(t)\|\leq B_{\mf{z}_0}^\infty$ and $\|\dot{\mf{z}}_0(t)\|\leq \overline{\theta}$ for all $t\in[\delta_{\mf{d}}+\delta_{\dot{\mf{d}}},\delta_{\mfs{hom}}]$, due to the choice of $\tilde{k}_0$ and since $B_\mf{d},_{\dot{\mf{d}}}$ are independent of $\tilde{k}_0$.  In the interval $t\in[\delta_{\mf{d}}+\delta_{\dot{\mf{d}}},\delta_{\mfs{hom}}]$, $\dot{\mf{z}}_0(t)$ plays the role of $\theta(t)$ in \eqref{eq:levant_recursive}. Moreover, note that $\delta_{\mf{d}}+\delta_{\dot{\mf{d}}} + \delta_\infty < (\delta_{\mfs{hom}}-\delta_\infty) + \delta_\infty = \delta_{\mfs{hom}}$. Therefore, due to the initial conditions at $t_0=\delta_\mf{d}+\delta_{\dot{\mf{d}}}$, Lemma \ref{prop:levant_contraction} is used to conclude that $\|\mf{z}_\mu(t)\|\leq B_{\mf{z}_\mu}^\infty$ for all $t\in[ \delta_{\mf{d}}+\delta_{\dot{\mf{d}}} + \delta_\infty,  \delta_{\mfs{hom}})$ which is a non empty interval. These inequalities remain invariant for all $t\geq \delta_{\mfs{hom}}$.
\end{proof}

Now, we are ready to show Theorem \ref{th:main} for $\mu>1$. First, note that $\mf{r}$-homogeneity of \eqref{eq:main:z} imply that the system is invariant to the change of variables
\begin{equation}
\label{eq:transformation}
t'=\eta t, \quad \mf{z}_\mu'(t')=\eta^{m-\mu+1}\mf{z}_\mu(t'/\eta).
\end{equation}
Let $B_{\mf{z}_\mu}^{t_0}>0$ be a sufficiently big bound such that the initial conditions comply $\|\mf{z}_\mu(0)\|\leq B_{\mf{z}_\mu}^{t_0}$. Therefore, choose arbitrary $\eta\in(0,1)>0$ making $\eta_\mu=\eta^{m-\mu+1}<1$ and $B_{\mf{z}_\mu}^\infty=\eta_\mu B_{\mf{z}_\mu}^{t_0}<B_{\mf{z}_\mu}^{t_0}$. Choose $\{\tilde{k}_\mu\}_{\mu=0}^m,\delta_\mf{z}$ as in Lemma \ref{le:contraction:leader}, such that $\|\mf{z}_\mu(\delta_\mf{z})\|\leq B_{\mf{z}_\mu}^\infty$. In other words, there is a contraction from $\|\mf{z}_\mu(0)\|\leq B_{\mf{z}_\mu}^{t_0}$ to $\|\mf{z}_\mu(\delta_\mf{z})\|\leq \eta^{m-\mu+1}B_{\mf{z}_\mu}^{t_0}<B_{\mf{z}_\mu}^{t_0}$ after a interval of length $\delta_\mf{z}$. Invariance under the transformation \eqref{eq:transformation} implies that a a similar contraction occurs from $\|\mf{z}_\mu(\delta_\mf{z})\|\leq \eta^{m-\mu+1}B_{\mf{z}_\mu}^{t_0}$ to $\|\mf{z}_\mu(\delta_\mf{z}+\eta\delta_\mf{z})\|\leq \eta^{m-\mu+1}(\eta^{m-\mu+1}B_{\mf{z}_\mu}^{t_0})$ after a interval of length $\eta\delta$. This procedure is performed sequentially to conclude that after $s\geq 1$ steps:
$$
\|\mf{z}_\mu(\delta_\mf{z}(1+\eta+\eta^2+\dots+\eta^s))\|\leq \eta^{(s+1)(m-\mu+1)}B_{\mf{z}_\mu}^{t_0}.
$$
Finally, note that $\lim_{s\to\infty}\delta_\mf{z}(1+\eta+\eta^2+\dots+\eta^s)=\delta_\mf{z}/(1-\eta)=:T$ and
$
\lim_{s\to\infty}\eta^{(s+1)(m-\mu+1)}B_{\mf{z}_\mu}^{t_0}=0.
$
Therefore, $\|\mf{z}_\mu(t)\|=0$ for all $t\geq T$.

\section{Performance Analysis}
\subsection{Robustness analysis under bounded measurement noise}

Up to this point, it has been assumed that some followers have access to the leader signal $u(t)$ directly. However, in practice, it is usual for followers to have access to a noisy version of $u(t)$ instead, due to communication or measurement errors. In the following, we provide an extension of Theorem~\ref{th:main} to consider bounded measurement noise.

\begin{proposition}
\label{prop:noise}
Consider the system:
\begin{equation}
\label{eq:main:noise}
\begin{array}{rl}
\dot{x}_{i,\mu} &= {x}_{i,\mu+1}-k_\mu \tilde{L}^{\frac{\mu+1}{m+1}}\sgn{\displaystyle\sum_{j\in\mathcal{N}_i}({x}_{i,0}-{x}_{j,0})+b_i({x}_{i,0}-u_i)}{\frac{m-\mu}{m+1}}\quad\quad \text{for\ }  \mu=0,\dots,m-1 \\
\dot{x}_{i,m} &= -k_m\tilde{L}\sgn{\displaystyle\sum_{j\in\mathcal{N}_i}({x}_{i,0}-{x}_{j,0})+b_i({x}_{i,0}-u_i)}{0},
\end{array}
\end{equation}
where $u_i(t)=u(t)+\varepsilon_i(t)$ for some noise signals $\varepsilon_i(t)\in\overline{\varepsilon}[-1,1]$ and a known noise bound $\overline{\varepsilon}>0$. Let the gains $\{k_\mu\}_{\mu=0}^m$ be selected as in Theorem~\ref{th:main}. Then, there exist positive constants $\{c_\mu\}_{\mu=0}^m$ such that for every set of initial conditions, there exists $T>0$ such that for all $t\geq T$,
\begin{equation}
\label{eq:bounds}
    |x_{i,\mu}(t)-u^{(\mu)}(t)|\leq c_\mu(\overline{\varepsilon})^{\frac{m-\mu+1}{m+1}}.
\end{equation}
\end{proposition}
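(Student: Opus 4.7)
The plan is to reduce the proof to the noise-free setting by exploiting the same homogeneity that powered Theorem~\ref{th:main}. First, I would rewrite the perturbed system in error coordinates. Since agent $i\in\mathcal{L}$ now receives $u_i=u+\varepsilon_i$, the argument of the sign-power in every equation of~\eqref{eq:main:noise} becomes
\begin{equation}
\mf{L}\mf{x}_0+\mf{B}(\mf{x}_{0}-\mathds{1}u-\bm{\varepsilon})=\mf{H}\mf{e}_0-\mf{B}\bm{\varepsilon},
\end{equation}
where $\bm{\varepsilon}=[\varepsilon_1,\dots,\varepsilon_\mfs{N}]^\top\in\overline{\varepsilon}[-1,1]^\mfs{N}$. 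Thus the error dynamics differ from~\eqref{eq:main:e} only in that the term $\sgn{\mf{H}\mf{e}_0}{(m-\mu)/(m+1)}$ is replaced by $\sgn{\mf{H}\mf{e}_0-\mf{B}\bm{\varepsilon}}{(m-\mu)/(m+1)}$, i.e.\ the noise enters as a bounded perturbation of the argument of the signum-like nonlinearity. Since $\mf{H}^{-1}\mf{B}\mathds{1}=\mathds{1}$, proving \eqref{eq:bounds} is equivalent to bounding the components of $\mf{e}_\mu$.

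Next, I would move to normalized coordinates $\mf{z}_\mu=\mf{e}_\mu/\tilde{L}$, exactly as in Section~\ref{sec:arbitrary}, so that the nominal part of the resulting dynamics is the homogeneous differential inclusion~\eqref{eq:main:z} of weights $\mf{r}=(m+1,m,\dots,1)$ and degree $-1$, while the perturbation in the signum argument is bounded by $\overline{\varepsilon}/\tilde{L}$. The standard homogeneity trick is now to rescale by $\lambda=(\overline{\varepsilon}/\tilde{L})^{1/(m+1)}$, defining $t'=t/\lambda$ and $\mf{z}_\mu'(t')=\lambda^{-(m+1-\mu)}\mf{z}_\mu(\lambda t')$. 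Under this coordinate/time transformation the nominal dynamics are invariant (by $\mf{r}$-homogeneity of degree $-1$) and the perturbation $\mf{B}\bm{\varepsilon}/\tilde{L}$ becomes a disturbance of magnitude at most $\lambda^{-(m+1)}\cdot(\overline{\varepsilon}/\tilde{L})=1$ inside the argument of the signum. Hence the rescaled problem is exactly a noise-perturbed version of \eqref{eq:main:z} with unit-bounded disturbance.

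The third step is to establish that, for the rescaled system, trajectories enter in finite time a fixed compact invariant set independent of $\overline{\varepsilon}$. I would argue this through a contraction argument parallel to Lemma~\ref{le:contraction:leader}: the Lyapunov estimates in Lemmas~\ref{le:state}--\ref{le:derivative} only require boundedness of the effective disturbance feeding $\dot{\mf{z}}_0$, and they continue to hold when the argument of the signum is perturbed by a unit-bounded term, delivering a ball $\{\|\mf{z}_\mu'\|\leq B_\mu^*\}$ that is reached in finite time and left invariant. For $m=1$, the same conclusion follows from the Lyapunov function~\eqref{eq:lyap:m:1} together with a standard dominating-term argument, since the noise-induced perturbation of $\dot{V}$ is homogeneous of lower weight and can be absorbed outside a homogeneous ball.

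Finally, I would undo the scaling: a bound $\|\mf{z}_\mu'\|\leq B_\mu^*$ translates into $\|\mf{z}_\mu\|\leq B_\mu^*\lambda^{m+1-\mu}=B_\mu^*(\overline{\varepsilon}/\tilde{L})^{(m-\mu+1)/(m+1)}$, hence $\|\mf{e}_\mu\|=\tilde{L}\|\mf{z}_\mu\|\leq c_\mu(\overline{\varepsilon})^{(m-\mu+1)/(m+1)}$ with $c_\mu=B_\mu^*\tilde{L}^{\mu/(m+1)}$, which in particular bounds each component $|x_{i,\mu}-u^{(\mu)}|$ as claimed. The main technical obstacle I expect is the third step: carefully verifying that the Lyapunov/contraction machinery of Section~\ref{sec:arbitrary} genuinely tolerates an additive perturbation inside the non-Lipschitz signum argument (rather than an additive perturbation of the dynamics), and that the resulting attractive ball for the rescaled system can be chosen uniformly in $\overline{\varepsilon}$, so that the homogeneity scaling yields the clean power-of-$\overline{\varepsilon}$ accuracy bounds.
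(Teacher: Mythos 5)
Your overall skeleton — error coordinates showing the noise enters as $\mf{H}\mf{e}_0-\mf{B}\bm{\varepsilon}$ inside the signum argument, the dilation rescaling with $\lambda=(\overline{\varepsilon}/\tilde{L})^{1/(m+1)}$, and the un-scaling that converts a fixed attractive ball for the unit-perturbed system into the $\overline{\varepsilon}^{(m-\mu+1)/(m+1)}$ bounds — is correct and is exactly the mechanism behind the paper's argument. The difference is that the paper does not re-derive this mechanism: its proof simply notes that for $\overline{\varepsilon}=0$ the system reduces to \eqref{eq:main}, whose finite-time stability is Theorem~\ref{th:main}, and then invokes the general accuracy result for perturbed finite-time-stable homogeneous differential inclusions \cite[Theorem 2]{levant2005}, treating the noise set-valuedly as $\overline{\varepsilon}[-1,1]^{\mfs{N}}$. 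What you propose is, in effect, to reprove that cited theorem for this particular system.

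That is where your step 3 has a genuine gap as written. The claim that Lemmas~\ref{le:state}--\ref{le:derivative} ``continue to hold when the argument of the signum is perturbed by a unit-bounded term'' does not go through directly. Lemma~\ref{le:state} relies on the exact identity $\mf{z}_0^\top\mf{H}\sgn{\mf{H}\mf{z}_0}{\frac{m}{m+1}}=\homm{\mf{H}\mf{z}_0}{\frac{m}{m+1}+1}$, which is destroyed when the argument becomes $\mf{H}\mf{z}_0-\mf{B}\bm{\varepsilon}'$; this can be patched (the wrong-sign components contribute only a bounded amount), but the patch must be written out and changes the attained bound. More seriously, Lemma~\ref{le:derivative} differentiates the right-hand side to bound $\dot{\mf{z}}_0$, and with noise inside the nonlinearity this produces a term in $\dot{\bm{\varepsilon}}$: the noise is only assumed bounded and measurable, not differentiable, so that lemma simply does not apply, and with it the reduction to the recursive form feeding Lemma~\ref{prop:levant_contraction} (via Lemma~\ref{le:contraction:leader}) breaks down. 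So either you carry out a genuinely modified robustness analysis of the whole contraction chain (nontrivial), or you do what the paper does: observe that the noisy error dynamics are contained in a homogeneous differential inclusion that is a homogeneously-weighted perturbation of the finite-time-stable inclusion \eqref{eq:main:e}, and cite \cite[Theorem 2]{levant2005}, which already packages the existence of the uniform attractive compact set you are asserting. The scaling and un-scaling steps of your write-up are fine; the missing piece is precisely the justification of that compact attractive invariant set for the unit-noise system.
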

\begin{proof}
    Note that if $\overline{\varepsilon}=0$, then \eqref{eq:main:noise} is reduced to \eqref{eq:main} for which convergence in finite time to $x_{i,\mu}(t)=u^{(\mu)}(t)$ is ensured by Theorem~\ref{th:main}. Hence, homogeneity of \eqref{eq:main} allows us to apply \cite[Theorem 2]{levant2005} to obtain the desired asymptotical bounds.
\end{proof}

\begin{remark}
    Note that the result in Proposition \ref{prop:noise} is similar to what is obtained for a standard (non-distributed) RED in \cite{levant2003}. This is, exact differentiation is recovered as $\overline{\varepsilon}$ tends to zero (i.e., the distributed differentiator is robust in the sense of~\cite{levant1998robust}) and homogeneous accuracy bounds are obtained as in \eqref{eq:bounds} for $\overline{\varepsilon}>0$ which have the typical shape expected for a RED.
\end{remark}

\subsection{Sampled data implementation}
Inspired by discretization procedures in \cite{livne2014} for the RED, we provide a proper discretization of the proposal in \eqref{eq:main}, in the sense that it retains appropriate asymptotic behavior with chattering reduction for small sampling steps. Given a sampling step $\Delta>0$ consider the protocol:
\begin{equation}
\label{eq:discrete}
\begin{aligned}
&{x}_{i,\mu}[k+1] = \sum_{\nu=0}^{m-\mu}\left(\frac{\Delta^\nu}{\nu!}\right)x_{i,\mu+\nu}[k] -\Delta k_\mu \tilde{L}^{\frac{\mu+1}{m+1}}\sgn{\displaystyle\sum_{j\in\mathcal{N}_i}({x}_{i,0}[k]-{x}_{j,0}[k])+b_i({x}_{i,0}[k]-u[k])}{\frac{m-\mu}{m+1}}\hspace{1em}\text{for\ }  \mu=0,\dots,m-1 \\
&{x}_{i,m}[k+1] ={x}_{i,m}[k] -\Delta k_m\tilde{L}\sgn{\displaystyle\sum_{j\in\mathcal{N}_i}({x}_{i,0}[k]-{x}_{j,0}[k])+b_i({x}_{i,0}[k]-u[k])}{0},
\end{aligned}
\end{equation}
where $u[k]:=u(\Delta k)$. In the following we study convergence and accuracy of the sampled-data proposal in \eqref{eq:discrete}.

\begin{remark}
While this section does not consider measurement noise for the sake of simplicity, same as in \cite{livne2014}, similar accuracy bounds as in Proposition \ref{prop:noise} are obtained in the sampled-data case as well, as a consequence of homogeneity.
\end{remark}

\begin{proposition}
    Consider the assumptions and gain design of Theorem \ref{th:main}. Then, there exist positive constants $\{c_\mu\}_{\mu=0}^m$ such that for every set of initial conditions there exists $K\in\mathbb{N}$ such that the solutions of \eqref{eq:discrete} comply with
    \begin{equation}
    \label{eq:bounds:delta}
    |x_{i,\mu}[k] - u^{(\mu)}[k]|\leq c_\mu\Delta^{m-\mu+1}
    \end{equation}
    for all $k\geq K$ and $\mu = 1,\ldots, m$.
\end{proposition}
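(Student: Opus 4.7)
The plan is to transfer the finite-time stability of the continuous-time error system \eqref{eq:main:e} established in Theorem \ref{th:main} to the sampled-data scheme \eqref{eq:discrete} via a homogeneity scaling argument, paralleling the treatment in \cite{livne2014} for the non-distributed RED. First, I would define the discrete errors $\mf{e}_\mu[k]=\mf{x}_\mu[k]-\mathds{1}u^{(\mu)}[k]$ and apply Taylor's formula with integral remainder to the leader signal, writing
$$
u^{(\mu)}[k+1]=\sum_{\nu=0}^{m-\mu}\frac{\Delta^\nu}{\nu!}u^{(\mu+\nu)}[k]+R_\mu[k],
$$
where, thanks to $|u^{(m+1)}(t)|\leq L$, the remainder satisfies $|R_\mu[k]|\leq L\Delta^{m-\mu+1}/(m-\mu+1)!$ for $\mu<m$ and $|R_m[k]|\leq L\Delta$. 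Subtracting this from \eqref{eq:discrete} and exploiting the identity $\mf{L}\mf{x}_0+\mf{B}(\mf{x}_0-\mathds{1}u)=\mf{H}\mf{e}_0$ already used in the proof of Theorem \ref{th:main} produces a discrete recursion on $\mf{e}_\mu[k]$ featuring exactly the fractional-power coupling of the continuous-time dynamics plus the bounded Taylor residual $R_\mu[k]$.

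Next, I would exploit the fact that the continuous error system \eqref{eq:main:e} is homogeneous of degree $-1$ with weights $r_\mu=m-\mu+1$ (as established in Section \ref{sec:arbitrary}). Introducing the rescaled variables $\tilde{\mf{e}}_\mu[k]=\Delta^{-(m-\mu+1)}\mf{e}_\mu[k]$ and rescaled step size $1$, the recursion becomes, up to the $\Delta$-factors cancelling by the weight choice, a forward-Euler one-step map of the continuous homogeneous system, perturbed by rescaled Taylor residuals of size at most $L/(m-\mu+1)!$, and hence \emph{independent of} $\Delta$. The $\mu=m$ equation, analogously to the inclusion \eqref{eq:main:e}, becomes the discontinuous update with an input disturbance uniformly bounded in $\Delta$.

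The remaining step is to combine finite-time stability of the continuous system (Theorem \ref{th:main}) with its homogeneous robustness to persistent bounded perturbations, following \cite[Theorem 2]{levant2005}, to conclude that the rescaled discrete errors $\tilde{\mf{e}}_\mu[k]$ enter a bounded invariant set after finitely many steps $K$. Reverting the scaling then yields $|x_{i,\mu}[k]-u^{(\mu)}[k]|\leq c_\mu\Delta^{m-\mu+1}$ for all $k\geq K$, which is \eqref{eq:bounds:delta}. The main obstacle is justifying that the Euler-type discretization of the discontinuous term at $\mu=m$, together with the fractional-power terms, does not generate chattering of order larger than what homogeneity predicts; this is precisely where the Taylor-based multi-step predictor used in \eqref{eq:discrete} is essential, and its analysis mirrors \cite{livne2014} once the vector homogeneous form developed in Section \ref{sec:arbitrary} is in place.
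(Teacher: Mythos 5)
Your first step (defining the sampled error, expanding $u^{(\mu)}[k+1]$ by Taylor's formula with a remainder of size $O(\Delta^{m-\mu+1})$, and subtracting to obtain a discrete error recursion driven by $\sgn{\mf{H}\mf{e}_0[k]}{\frac{m-\mu}{m+1}}$ plus the residual) is exactly how the paper begins, and your bookkeeping of the residual sizes is correct. The gap is in the middle step. Rescaling $\tilde{\mf{e}}_\mu[k]=\Delta^{-(m-\mu+1)}\mf{e}_\mu[k]$ does remove $\Delta$ from the recursion, but what you are left with is the sampled-data system with \emph{unit} step and an $O(1)$ residual; at that point there is no small parameter left, the one-step map is in no sense a small perturbation of the continuous flow, and neither Theorem~\ref{th:main} nor \cite[Theorem 2]{levant2005} applies: the former concerns the continuous-time inclusion \eqref{eq:main:e}, and the latter is stated for homogeneous differential inclusions with bounded measurement delays/noises, not for discrete recursions. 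Asserting that the rescaled iterates ``enter a bounded invariant set after finitely many steps'' is essentially asserting the proposition itself for $\Delta=1$, which is the nontrivial content you still owe; ``homogeneous robustness to persistent bounded perturbations'' cannot supply it, because the discretization error of a unit-step scheme is not a perturbation that vanishes with the state.

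The missing device is a bridge from the discrete iterates to an object to which Levant's accuracy theorem literally applies. The paper builds it by writing the error recursion as $\mf{e}_\mu[k+1]\in\mf{e}_\mu[k]+\Delta\,\mf{g}_\mu(\mf{e}[k];\Delta)$ and interpolating linearly between sampling instants, so that the interpolation solves the \emph{delay} differential inclusion $\dot{\mf{e}}_\mu(t)\in\mf{g}_\mu(\mf{e}(t-[0,\Delta]);\Delta)$ in \eqref{eq:delay:inclusion}. This inclusion coincides with \eqref{eq:main:e} at $\Delta=0$, which is finite-time stable by Theorem~\ref{th:main}, and it is homogeneous with the delay scaling consistently with the weights $r_\mu=m-\mu+1$; this is precisely the hypothesis set of \cite[Theorem 2]{levant2005}, which then yields $\|\mf{e}_\mu(t)\|\leq c_\mu'\Delta^{m-\mu+1}$ after a finite time, and Proposition~\ref{prop:equilibrium} (i.e.\ $\mf{H}^{-1}\mf{B}\mathds{1}=\mathds{1}$) converts this into \eqref{eq:bounds:delta}. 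Within that framework the seed of the argument is the small-$\Delta$ contraction obtained from continuity of Filippov solutions with respect to the delay together with finite-time stability, and homogeneity propagates it across scales by dilating state, time, and the delay parameter jointly; your scaling intuition is the right one, but it must be applied inside this continuous-time setting rather than used to reduce to a unit-step discrete map. If you prefer not to invoke the delay-inclusion embedding, you would instead have to prove global ultimate boundedness of the unit-step recursion directly (as in the explicitly discrete analysis of \cite{livne2014}), which is a substantially longer argument than the one you sketch.
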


\begin{proof}
    First, write \eqref{eq:discrete} in vector form as:
\begin{equation}
\label{eq:discrete:vector}
\begin{aligned}
&\mf{x}_{\mu}[k+1] = \sum_{\nu=0}^{m-\mu}\left(\frac{\Delta^\nu}{\nu!}\right)\mf{x}_{\mu+\nu}[k] -\Delta k_\mu \tilde{L}^{\frac{\mu+1}{m+1}}\sgn{\mf{L}\mf{x}_0[k]+\mf{B}(\mf{x}_0[k]-\mathds{1}u[k])}{\frac{m-\mu}{m+1}} \text{\quad\quad for\ }  \mu=0,\dots,m-1 \\
&\mf{x}_{m}[k+1] =\mf{x}_{m}[k] -\Delta k_m\tilde{L}\sgn{\mf{L}\mf{x}_0+\mf{B}(\mf{x}_0-\mathds{1}u[k])}{0}.
\end{aligned}
\end{equation}
Similarly as before, define the error variable as $\mf{e}_\mu[k]= \mf{x}_\mu[k]-\mf{H}^{-1}\mf{B}\mathds{1}u^{(\mu)}[k]$. Note that 
\begin{equation}
\label{eq:input:expansion}
\begin{aligned}
    \mf{H}^{-1}\mf{B}\mathds{1}u^{(\mu)}[k+1]= \sum_{\nu=0}^{m-\mu}\left(\frac{\Delta^\nu}{\nu!}\right)\mf{H}^{-1}\mf{B}\mathds{1}u^{(\mu+\nu)}[k] + \mf{r}_{\mu}[k],
\end{aligned}
\end{equation}
with $\mf{r}_{\mu}[k]$ as in \eqref{eq:remainder} in Appendix \ref{sec:aux}. Subtract \eqref{eq:input:expansion} to \eqref{eq:discrete:vector} obtaining:
\begin{equation}
\label{eq:discrete:vector:error}
\begin{aligned}
&\mf{e}_{\mu}[k+1] = \sum_{\nu=0}^{m-\mu}\left(\frac{\Delta^\nu}{\nu!}\right)\mf{e}_{\mu+\nu}[k] -\Delta k_\mu \tilde{L}^{\frac{\mu+1}{m+1}}\sgn{\mf{H}\mf{e}_0[k]}{\frac{m-\mu}{m+1}}+\mf{r}_\mu[k]\quad\quad\text{for\ }  \mu=0,\dots,m-1 \\
&\mf{e}_{m}[k+1] \in\mf{e}_{m}[k] -\Delta k_m\tilde{L}\sgn{\mf{H}\mf{e}_0[k]}{0}+\Delta[-\tilde{L},\tilde{L}]^\mfs{N} \\
\end{aligned}
\end{equation}
Define
$$
\begin{aligned}
    &\mf{g}_\mu(\mf{e}[k];\Delta)=\sum_{\nu=1}^{m-\mu}\left(\frac{\Delta^{\nu-1}}{\nu!}\right)\mf{e}_{\mu+\nu}[k] - k_\mu \tilde{L}^{\frac{\mu+1}{m+1}}\sgn{\mf{H}\mf{e}_0[k]}{\frac{m-\mu}{m+1}}+\frac{\mf{r}_\mu[k]}{\Delta}\text{\quad\quad for }\mu=0,\dots,m-1\\
    &\mf{g}_m(\mf{e}[k];\Delta)=k_m\tilde{L}\sgn{\mf{H}\mf{e}_0[k]}{0}+[-\tilde{L},\tilde{L}]^\mfs{N},
\end{aligned}
$$
so that \eqref{eq:discrete:vector:error} can be written as:
\begin{equation}
\label{eq:discrete:vector:error:inclusion}
\begin{aligned}
&\mf{e}_{\mu}[k+1] \in \mf{e}_{\mu}[k]+\Delta\mf{g}_\mu(\mf{e}[k];\Delta)
\end{aligned}
\end{equation}
for all $\mu=0,\dots,m$. Let:
$$
\mf{e}_\mu(t)=\mf{e}_\mu[k] + (t-\Delta)\mf{v}_\mu[k]
$$
for some $\mf{v}_\mu[k]\in\mf{g}_\mu(\mf{e}[k];\Delta)$ and $t\in[k\Delta,(k+1)\Delta]$. Thus, $\mf{e}_\mu(t)$ satisfy \eqref{eq:discrete:vector:error:inclusion} at $t=(k+1)\Delta$ as well as:
\begin{equation}
\label{eq:delay:inclusion}
\dot{\mf{e}}_\mu(t)\in \mf{g}_\mu(\mf{e}(t-[0,\Delta]);\Delta)
\end{equation}
since $\mf{e}_\mu[k]=\mf{e}_\mu(t-(t-k\Delta))\in\mf{e}_\mu(t-[0,\Delta])$, for $\mu=0,\dots,m$. Note that the inclusion in \eqref{eq:delay:inclusion} is equivalent to \eqref{eq:main:e} when $\Delta=0$. Henceforth, Theorem \ref{th:main} ensures finite time convergence towards the origin when $\Delta=0$. Hence, \eqref{eq:delay:inclusion} is in the form required by \cite[Theorem 2]{levant2005}, where due to homogeneity it can be concluded that $\|\mf{e}_\mu(t)|\|\leq c_\mu'\Delta^{m-\mu+1}$ for $t\geq T$ for some $T, c_\mu'>0$ in the case with $\Delta>0$. Hence, due to Proposition \ref{prop:equilibrium}, the bounds in \eqref{eq:bounds:delta} follow.

\end{proof}

\section{Numerical example}
\label{Sec:Num}
\subsection{Second order system}
\label{sec:second}
Consider the case with $m=1$. The protocol \eqref{eq:main} then takes the form:
\begin{equation}
\label{eq:second:order}
\begin{array}{rl}
\dot{x}_{i,0} &= {x}_{i,1}-k_0 \sqrt{\tilde{L}}\sgn{\displaystyle\sum_{j\in\mathcal{N}_i}({x}_{i,0}-{x}_{j,0})+b_i({x}_{i,0}-u)}{\frac{1}{2}}\\
\dot{x}_{i,1} &= -k_1\tilde{L}\sgn{\displaystyle\sum_{j\in\mathcal{N}_i}({x}_{i,0}-{x}_{j,0})+b_i({x}_{i,0}-u)}{0}.
\end{array}
\end{equation}
The sampled-data implementation of \eqref{eq:main} according to \eqref{eq:discrete} corresponds to:
\begin{equation}
\begin{aligned}
&{x}_{i,0}[k+1] = {x}_{i,0}[k]+\Delta x_{i,1}[k] -\Delta k_0 \sqrt{\tilde{L}}\sgn{\displaystyle\sum_{j\in\mathcal{N}_i}({x}_{i,0}[k]-{x}_{j,0}[k])+b_i({x}_{i,0}[k]-u[k])}{\frac{1}{2}}\\
&{x}_{i,1}[k+1] ={x}_{i,1}[k] -\Delta k_1\tilde{L}\sgn{\displaystyle\sum_{j\in\mathcal{N}_i}({x}_{i,0}[k]-{x}_{j,0}[k])+b_i({x}_{i,0}[k]-u[k])}{0},
\end{aligned}
\end{equation}
which coincides with a forward Euler discretization of \eqref{eq:main}. For the simulations, we set $\Delta =10^{-3}s$. For the sake of simplicity, set the leader position as $u(t) = \sin(\omega t)$ with $\omega = 0.5$ such that $|u^{(m+1)}(t)|=|\ddot{u}(t)|\leq L$ with $L = \omega^2 = 0.25$. Consider a group of $\mfs{N}=10$ agents on a communication graph $\mathcal{G}$ with a single undirected cycle, i.e., $(j,i)\in\mathcal{E}$ iff $i-j=\pm1 \textrm{mod} \mfs{N}$, with agents in $\mathcal{L}=\{1,3,5\}$ having access to the leader's output $u(t)$ at discrete sampling-times $t=k\Delta$, $k=0,1,\ldots$. This sets $\tilde{L} = \mfs{N}L\lambda_{\max}(\mf{H}^{-1}\mf{B}) = 2.5$. In addition, one can set $k_0=2, k_1=1.1$ which can be numerically shown to comply with the conditions of Theorem \ref{th:main}. Initial conditions $x_{i,0}(0),x_{i,1}(0)$ were chosen arbitrarily in the interval $[-5,5]$. 

The trajectories of the states \(x_{i,\mu}(t)\), at times \(t=0, \Delta, 2\Delta, \ldots\), are depicted in the left column of Figure \ref{fig:second:order}. All agents achieve finite-time consensus on both the leader's signal \(u(t)\) and its derivative \(\dot{u}(t)\), despite not all agents having access to \(u(t)\). Additionally, noise was introduced to the agents with access to the leader's output as \(u_i(t) = u(t) + \varepsilon_i(t)\), where \(\varepsilon_i(t)\) was randomly drawn from an uniform distribution over \(\overline{\varepsilon}[-1,1]\) with \(\overline{\varepsilon} = 0.1\), for each \(t = k\Delta\). The right column of Figure \ref{fig:second:order} illustrates the noisy measurements \(u_i(t)\) and compares them with the estimations from \(x_{i,0}(t)\), showcasing the filtering properties of the observer. Moreover, \(x_{i,1}(t)\) closely approximates the derivative \(\dot{u}(t)\) despite the presence of noise.

\begin{figure}
\centering
\includegraphics[width=0.8\textwidth]{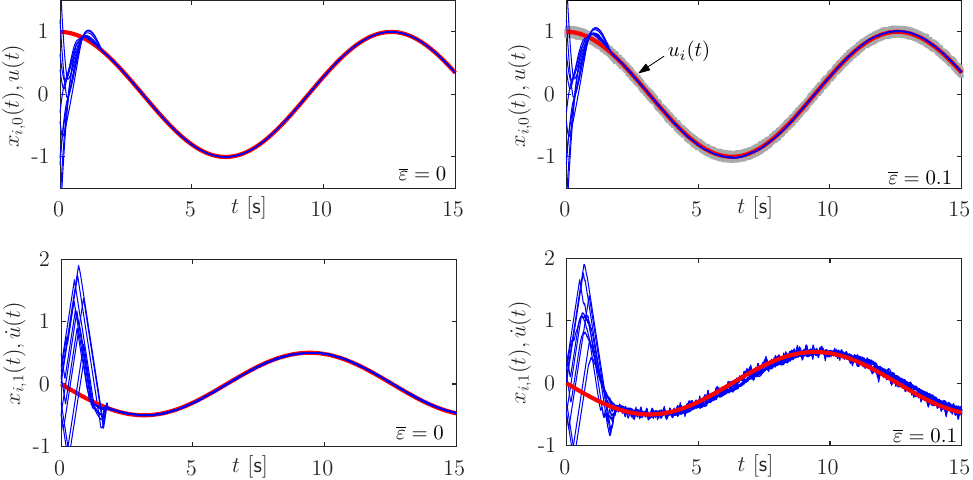}
\caption{Trajectories for the leader position $u(t)$ and its derivative $\dot{u}(t)$ (red), as well as the trajectories for $x_{i,0}(t), x_{i,1}(t)$ for all agents $i\in\mathcal{I}$ (blue) with $m=1$. Two scenarios are considered, namely $\overline{\varepsilon}=0$ (noiseless) and $\overline{\varepsilon} = 0.1$. Due to the sampled-data implementation, only $t=0, \Delta, 2\Delta, \dots$ with $\Delta=10^{-3}s$ are depicted.}
\label{fig:second:order}
\end{figure}

\subsection{High order system}
 Consider the same scenario as in the second order example. However, in this case, we set $m=3$ for a fourth-order leader system. Consequently, $|u^{(m+1)}(t)|=|u^{(4)}(t)| \leq L$ with $L = \omega^4=0.0625, \tilde{L} = 0.625$. The sampled-data implementation of \eqref{eq:main} using \eqref{eq:discrete} results in:

 \begin{equation}
\begin{aligned}
&{x}_{i,0}[k+1] = x_{i,0}[k] + \Delta x_{i,1}[k] + \frac{\Delta^2}{2}x_{i,2}[k] + \frac{\Delta^3}{6}x_{i,3}[k] &-\Delta k_0\tilde{L}^{\frac{1}{4}}\sgn{\displaystyle\sum_{j\in\mathcal{N}_i}({x}_{i,0}[k]-{x}_{j,0}[k])+b_i({x}_{i,0}[k]-u[k])}{\frac{3}{4}}\\
&{x}_{i,1}[k+1] = x_{i,1}[k]+ \Delta x_{i,2}[k] + \frac{\Delta^2}{2}x_{i,3}[k] &-\Delta k_1 \tilde{L}^{\frac{2}{4}}\sgn{\displaystyle\sum_{j\in\mathcal{N}_i}({x}_{i,0}[k]-{x}_{j,0}[k])+b_i({x}_{i,0}[k]-u[k])}{\frac{2}{4}} \\
&{x}_{i,2}[k+1] = x_{i,2}[k] + \Delta x_{i,3}[k] &-\Delta k_2 \tilde{L}^{\frac{3}{4}}\sgn{\displaystyle\sum_{j\in\mathcal{N}_i}({x}_{i,0}[k]-{x}_{j,0}[k])+b_i({x}_{i,0}[k]-u[k])}{\frac{1}{4}}\\
&{x}_{i,3}[k+1] ={x}_{i,3}[k] &-\Delta k_3\tilde{L}\sgn{\displaystyle\sum_{j\in\mathcal{N}_i}({x}_{i,0}[k]-{x}_{j,0}[k])+b_i({x}_{i,0}[k]-u[k])}{0} \\
\end{aligned}
\end{equation}
which is no longer a forward Euler discretization of \eqref{eq:main}. Following Theorem \ref{th:main}, we pick $\tilde{k}_1,\tilde{k}_2,\tilde{k}_3$ for a standard RED. Examples for these parameters can be found in \cite{livne2014} where we borrow $\tilde{k}_1=1.1, \tilde{k}_2=1.5, \tilde{k}_3=2$. Similarly as argued for the RED in \cite{levant2003,livne2014}, $\tilde{k}_0=50$ was found by simulation. Therefore, $k_0 = \tilde{k}_0=50, k_1 = \tilde{k}_1 k_0^{\frac{2}{3}}=14.92, k_2 = \tilde{k}_2 k_0^{\frac{1}{2}}=10.6, k_3 = \tilde{k}_3=2$.

The resulting trajectories for this experiment are shown in Figure \ref{fig:fourth:order}. Similarly, as before, we consider the cases with and without noise with $\overline{\varepsilon}=0$ and $\overline{\varepsilon}=0.1$ respectively. In the noise-free case, it can be observed that $x_{i,0}(t), x_{i,1}(t), x_{i,2}(t), x_{i,3}(t)$ can effectively estimate $u(t), \dot{u}(t), \ddot{u}(t), u^{(3)}(t)$ respectively after some finite-time, where some chattering is evident in $x_{i,3}(t)$ due to the sampled-data implementation. Similarly as in the example for the second order system, the estimates $x_{i,0}(t), x_{i,1}(t), x_{i,2}(t), x_{i,3}(t)$ still converge to a neighborhood around $u(t), \dot{u}(t), \ddot{u}(t), u^{(3)}(t)$ respectively, even in the presence of measurement noise.

\begin{figure}
\centering
\includegraphics[width=0.8\textwidth]{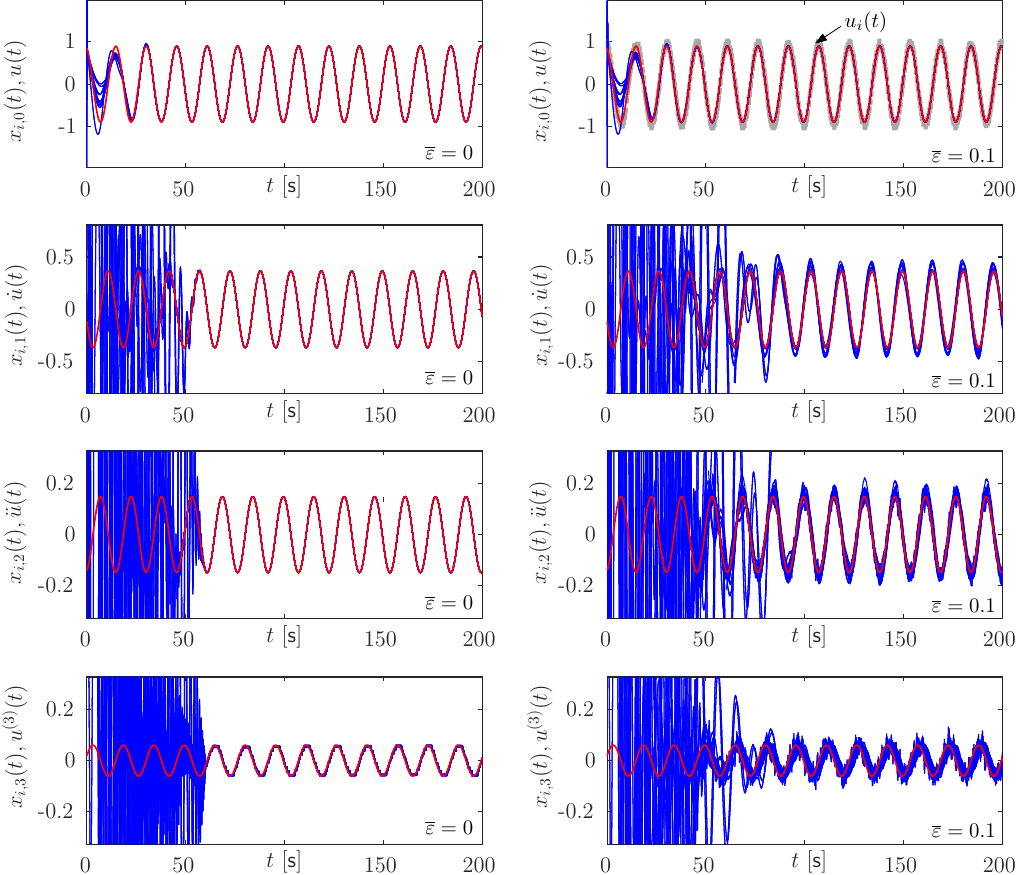}
\caption{Trajectories for the leader position $u(t)$ and its derivatives $\dot{u}(t),\ddot{u}(t),{u}^{(3)}(t)$ (red), as well as the trajectories for $x_{i,0}(t), x_{i,1}(t), x_{i,2}(t), x_{i,3}(t)$ for all agents $i\in\mathcal{I}$ (blue) with $m=3$. Two scenarios are considered, namely $\overline{\varepsilon}=0$ (noiseless) and $\overline{\varepsilon} = 0.1$. Due to the sample-data implementation, only $t=0, \Delta, 2\Delta, \dots$ with $\Delta=10^{-3}$ are depicted.}
\label{fig:fourth:order}
\end{figure}

\section{Conclusions}
\label{Sec:Concl}
We have introduced a novel distributed leader-observer that distinguishes itself by communicating only the observer output, rather than the full state, to neighboring agents, and by the fact that no agent needs to know whether it has communication with the leader or not. Due to the homogeneous structure of the algorithm, features similar to those for the standard Robust Exact Differentiator (RED) are exhibited by our proposal. In particular, in the ideal case of continuous-time communication and absence of measurement errors, exact estimation of the leader and its derivatives were obtained, thus performing what we call exact distributed differentiation. The proofs were based on a nontrivial adaptation and extension of techniques used for the RED. The framework admits to incorporating, for the first time in this area, measurement noise for the leader signal and a sampled-data implementation. The effectiveness of our approach has been numerically verified in both second-order and fourth-order systems, demonstrating its applicability and robustness across the estimation of derivatives of different orders. \Arxiv{As a future work, we consider the extension of the methods presented in this work to dynamic and directed networks. Moreover, taking into account the well-known properties of the variable structure control, the proposed approach can also be implemented to estimate an unknown input of the leader by resorting to the concept of the equivalent control, e.g., by filtering the discontinuous input of the observers, or by increasing the dimension of the observer state in case of Lipschitz unknown inputs.}

\section*{Acknowledgements}

H. Haimovich gratefully acknowledges the financial support provided by Agencia I+D+i PICT 2021-I-A-0730, Argentina. E. Usai gratefully acknowledges the financial backing provided by the research project “Network 4 Energy Sustainable Transition – NEST” funded under the National Recovery and Resilience Plan (NRRP) of Ministero dell’Università e della Ricerca (MUR); funded by the European Union – Next Generation EU, code PE0000021, CUP F53C2200077000.

\appendix
\section{Auxiliary results}
\label{sec:aux}

\begin{lemma}\cite{cao2012}
\label{le:H}
    Let Assumption \ref{as:graph} hold for a graph $\mathcal{G}$. Moreover, let $\mf{L}$ be the Laplacian matrix of $\mathcal{G}$ and $\mf{B}=\text{diag}(b_1,\dots,b_\mfs{N})$ with $b_i\in\{0,1\}$. Then, $\mf{L}+\mf{B}$ is positive definite. 
\end{lemma}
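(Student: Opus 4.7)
The plan is to show positive definiteness by showing that the quadratic form $\mf{v}^\top(\mf{L}+\mf{B})\mf{v}$ is strictly positive for every nonzero $\mf{v}\in\mathbb{R}^\mfs{N}$. Since both matrices are symmetric, I would first argue that each summand is positive semidefinite: $\mf{L}$ is the Laplacian of an undirected graph, so $\mf{v}^\top\mf{L}\mf{v}=\frac{1}{2}\sum_{(i,j)\in\mathcal{E}}(v_i-v_j)^2\geq 0$, and $\mf{B}$ is diagonal with nonnegative entries, so $\mf{v}^\top\mf{B}\mf{v}=\sum_i b_i v_i^2\geq 0$. Their sum is therefore symmetric positive semidefinite.

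Next I would upgrade semidefiniteness to definiteness by contradiction. Suppose $\mf{v}\neq\mf{0}$ satisfies $\mf{v}^\top(\mf{L}+\mf{B})\mf{v}=0$. Since both terms are nonnegative, each must vanish separately. From $\mf{v}^\top\mf{L}\mf{v}=0$ together with the classical fact that for a connected undirected graph $\ker(\mf{L})=\mathrm{span}(\mathds{1})$ (used already in the proof of Proposition \ref{prop:equilibrium}), I would conclude $\mf{v}=c\mathds{1}$ for some scalar $c$. Then $\mf{v}^\top\mf{B}\mf{v}=c^2\sum_{i=1}^{\mfs{N}} b_i$, and by Assumption \ref{as:graph} at least one $b_i$ equals one, so $\sum_i b_i \geq 1>0$. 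This forces $c=0$, contradicting $\mf{v}\neq\mf{0}$.

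There is no real obstacle here: the only nontrivial ingredient is the characterization of the kernel of a connected graph's Laplacian, which is standard and already cited in the excerpt (e.g.\ \cite[Page 279]{godsil2001}). Since the claim is attributed to \cite{cao2012}, I would also mention that the result is well known and just sketch the two-line argument above.
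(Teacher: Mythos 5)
Your argument is correct: positive semidefiniteness of $\mf{L}$ (via the edge quadratic form) and of $\mf{B}$, followed by the kernel characterization $\ker(\mf{L})=\mathrm{span}(\mathds{1})$ for a connected graph and the fact that Assumption \ref{as:graph} guarantees at least one $b_i=1$, is exactly the standard route to this result. Note that the paper itself gives no proof of Lemma \ref{le:H} --- it simply cites \cite{cao2012} --- so your two-line quadratic-form argument is a valid self-contained substitute for the cited reference, and it is consistent with how the paper uses the lemma (e.g.\ the kernel fact also appears in the proof of Proposition \ref{prop:equilibrium}).
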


\begin{lemma}[Schur complement lemma]\cite[Theorem 7.7.7]{horn} 
\label{le:schur}
\Rev{Let 
$$
\mf{M} = \begin{bmatrix}
    \mf{A} & \mf{B} \\
    \mf{B}^\top & \mf{C}
\end{bmatrix}
$$
be symmetric with symmetric $\mf{A}\in\mathbb{R}^{p\times p}$, $\mf{C}\in\mathbb{R}^{q\times q}$. Then, $\mf{M}$ is positive definite if and only if $\mf{A}$ and its Schur complement $\mf{C}-\mf{B}^\top\mf{A}^{-1}\mf{B}$ are positive definite.}
\end{lemma}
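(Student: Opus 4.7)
The plan is to use the block $\mf{L}\mf{D}\mf{L}^\top$ decomposition, which is the standard route to this identity. First I would observe that whenever $\mf{A}$ is invertible, one has the congruence factorization
\begin{equation}
\mf{M} \;=\; \mf{L}\,\mf{D}\,\mf{L}^\top, \qquad \mf{L} \;=\; \begin{bmatrix} \mf{I}_p & \mf{0} \\ \mf{B}^\top \mf{A}^{-1} & \mf{I}_q \end{bmatrix}, \qquad \mf{D} \;=\; \begin{bmatrix} \mf{A} & \mf{0} \\ \mf{0} & \mf{S} \end{bmatrix},
\end{equation}
where $\mf{S} := \mf{C} - \mf{B}^\top \mf{A}^{-1}\mf{B}$ is the Schur complement. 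The identity is a direct block multiplication check. The structural observation I would then emphasize is that $\mf{L}$ is unit block-lower-triangular and therefore invertible, so $\mf{M}$ and $\mf{D}$ are congruent and hence share the same inertia.

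For the forward direction, assume $\mf{M}\succ\mf{0}$. Since any principal submatrix of a positive definite matrix is itself positive definite (apply the quadratic form to vectors supported on the first $p$ coordinates), one gets $\mf{A}\succ\mf{0}$; in particular $\mf{A}$ is invertible and the factorization above is well defined. Then $\mf{D} = \mf{L}^{-1}\mf{M}\mf{L}^{-\top}$ is congruent to $\mf{M}$ and therefore positive definite, and evaluating its quadratic form on vectors of the shape $[\mf{0}^\top,\mf{w}^\top]^\top$ yields $\mf{w}^\top \mf{S}\mf{w} > 0$ for every nonzero $\mf{w}$, giving $\mf{S}\succ\mf{0}$.

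For the reverse direction, assuming $\mf{A}\succ\mf{0}$ and $\mf{S}\succ\mf{0}$, the block diagonal matrix $\mf{D}$ is positive definite since its quadratic form decomposes as $\mf{u}^\top \mf{A}\mf{u} + \mf{w}^\top \mf{S}\mf{w}$, a sum of two nonnegative terms that simultaneously vanish only when $\mf{u}=\mf{0}$ and $\mf{w}=\mf{0}$. Then $\mf{M} = \mf{L}\mf{D}\mf{L}^\top$ is congruent to a positive definite matrix via the invertible $\mf{L}$, so $\mf{M}\succ\mf{0}$.

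There is no real obstacle here beyond bookkeeping: verifying the block product $\mf{L}\mf{D}\mf{L}^\top = \mf{M}$ and recognizing that unit triangularity makes $\mf{L}$ invertible. The entire argument rests on the single fact that congruence by an invertible matrix preserves positive definiteness, and that principal submatrices inherit it.
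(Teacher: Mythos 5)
Your proof is correct. Note that the paper does not actually prove this lemma: it is stated as a cited result from Horn and Johnson (Theorem 7.7.7), so there is no in-paper argument to compare against. Your block $\mf{L}\mf{D}\mf{L}^\top$ congruence factorization, with $\mf{A}\succ\mf{0}$ extracted first as a principal submatrix so that the factorization is well defined, is the standard textbook route (essentially the one in the cited reference), and all steps — the block multiplication, invertibility of the unit triangular factor, and preservation of positive definiteness under congruence — are sound and complete.
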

\begin{lemma}\cite[Corollary 19]{aldana2021}
\label{le:sgn}
Given any $\mf{v}\in\mathbb{R}^\mfs{N}$ and $0<\alpha<1$ then, it follows that 
\begin{enumerate}
    \item $\|\sgn{\mf{v}}{\alpha}\|\leq \mfs{N}^\frac{1-\alpha}{2}\|\mf{v}\|^\alpha$.
    \item $\homm{\mf{v}}{\alpha}\geq \|\mf{v}\|^\alpha$.
\end{enumerate}
\end{lemma}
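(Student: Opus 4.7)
The plan is to prove both inequalities as elementary scalar consequences of the concavity of $t\mapsto t^\beta$ on $[0,\infty)$ for $0<\beta\leq 1$, which is essentially the well-known monotonicity of the $\ell^p$-norms on $\mathbb{R}^\mfs{N}$ (decreasing in $p$). No distributed-system machinery or homogeneity argument is needed here; the whole content of the lemma reduces to two short calculations.

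For item~1, I would first rewrite $\|\sgn{\mf{v}}{\alpha}\|^2 = \sum_{i=1}^{\mfs{N}} |v_i|^{2\alpha}$, since by definition $|\sgn{v_i}{\alpha}| = |v_i|^\alpha$. Setting $w_i := v_i^2 \geq 0$, the desired inequality is equivalent to $\sum_{i=1}^{\mfs{N}} w_i^{\alpha} \leq \mfs{N}^{1-\alpha}\bigl(\sum_{i=1}^{\mfs{N}} w_i\bigr)^{\alpha}$. This is precisely Jensen's inequality applied to the concave function $f(x)=x^\alpha$ against the uniform measure on $\{1,\dots,\mfs{N}\}$, namely $\tfrac{1}{\mfs{N}}\sum w_i^\alpha \leq \bigl(\tfrac{1}{\mfs{N}}\sum w_i\bigr)^\alpha$; multiplying through by $\mfs{N}$ and taking square roots delivers item~1. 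Equivalently, one can invoke H\"older's inequality with conjugate exponents $1/\alpha$ and $1/(1-\alpha)$ applied to the pair of sequences $(w_i^\alpha)_i$ and $(1)_i$.

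For item~2, I would rely on the scalar subadditivity $(a+b)^\beta \leq a^\beta + b^\beta$, valid for all $a,b\geq 0$ and $0<\beta\leq 1$, and extend it by induction to $\bigl(\sum_i a_i\bigr)^\beta \leq \sum_i a_i^\beta$. Applying this with $a_i := v_i^2$ and $\beta := \alpha/2 \in (0,1/2]$ yields $\|\mf{v}\|^\alpha = \bigl(\sum v_i^2\bigr)^{\alpha/2} \leq \sum |v_i|^\alpha = \homm{\mf{v}}{\alpha}$, which is exactly item~2.

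No step is delicate; the only ``choice'' worth flagging is to recognize that, because the Euclidean norm on the right-hand side involves a square root, the correct scalar exponent to feed into the subadditivity inequality is $\alpha/2$ rather than $\alpha$. Both bounds are sharp: item~1 saturates when all $|v_i|$ coincide (the equality case of Jensen's inequality), while item~2 saturates when at most one component of $\mf{v}$ is nonzero. Consequently, I expect no real obstacle in carrying out the argument, and the resulting proof should be only a few lines long.
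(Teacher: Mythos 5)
Your proof is correct. Note that the paper does not prove this lemma at all; it simply cites \cite[Corollary 19]{aldana2021}, so there is no internal argument to compare against. Your two calculations are exactly the standard ones and are complete: for item 1, writing $\|\sgn{\mf{v}}{\alpha}\|^2=\sum_i |v_i|^{2\alpha}$ and applying Jensen's inequality (equivalently, the power-mean or H\"older inequality) to the concave map $x\mapsto x^\alpha$ with $w_i=v_i^2$ gives $\sum_i w_i^\alpha\leq \mfs{N}^{1-\alpha}\bigl(\sum_i w_i\bigr)^\alpha$, and taking square roots yields the stated bound; for item 2, the subadditivity $\bigl(\sum_i a_i\bigr)^\beta\leq\sum_i a_i^\beta$ for $0<\beta\leq 1$ applied with $a_i=v_i^2$ and $\beta=\alpha/2$ gives $\|\mf{v}\|^\alpha\leq\homm{\mf{v}}{\alpha}$ directly. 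The only cosmetic slip is that $\alpha/2$ lies in $(0,1/2)$ rather than $(0,1/2]$ since $\alpha<1$, which changes nothing. Your observation about the equality cases is also accurate, and your self-contained argument is, if anything, more informative than the citation the paper relies on.
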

\begin{lemma}
\label{le:ineq}
Given any $\alpha>0$ then, for every $\mf{v},\mf{w}\in\mathbb{R}^\mfs{N}$ it follows that $$\left(\mf{v}-\mf{w}\right)^\top\left(\sgn{\mf{v}}{\alpha}-\sgn{\mf{w}}{\alpha}\right)\geq 0$$ with equality if and only if $\mf{v}=\mf{w}$.
\end{lemma}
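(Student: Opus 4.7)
The plan is to reduce the vector inequality to $\mathsf{N}$ independent scalar inequalities by expanding the inner product componentwise:
$$
(\mf{v}-\mf{w})^\top(\sgn{\mf{v}}{\alpha}-\sgn{\mf{w}}{\alpha}) = \sum_{i=1}^{\mfs{N}} (v_i-w_i)\bigl(\sgn{v_i}{\alpha}-\sgn{w_i}{\alpha}\bigr),
$$
so it suffices to prove the scalar statement: for all $v,w\in\mathbb{R}$ and $\alpha>0$, $(v-w)(\sgn{v}{\alpha}-\sgn{w}{\alpha})\geq 0$, with equality if and only if $v=w$.

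For the scalar statement, the key observation is that the function $f(x):=\sgn{x}{\alpha}=|x|^\alpha\mathrm{sign}(x)$ is strictly increasing on $\mathbb{R}$ whenever $\alpha>0$. Indeed, $f$ is continuous at $0$ (since $f(0)=0$ and $|f(x)|=|x|^\alpha\to 0$ as $x\to 0$), and for $x\neq 0$ it is differentiable with $f'(x)=\alpha|x|^{\alpha-1}>0$; moreover $f$ is odd, so $f(x)>0=f(0)>f(-x')$ for any $x,x'>0$. Hence $f$ is strictly monotone on all of $\mathbb{R}$, which immediately yields $(v-w)(f(v)-f(w))\geq 0$ with equality iff $v=w$.

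Combining, each summand in the expansion is nonnegative, so the full sum is nonnegative; and the sum equals zero iff every summand vanishes, which by the scalar equality case forces $v_i=w_i$ for all $i$, i.e.\ $\mf{v}=\mf{w}$. No obstacle is expected here — the only subtle point is handling the case where one of $v,w$ is zero and the other is not, but this is covered uniformly by the strict monotonicity of $f$ on $\mathbb{R}$ (including through the origin).
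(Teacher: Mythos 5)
Your proof is correct and follows essentially the same route as the paper: reduce to scalar terms and invoke the strict monotonicity of $x\mapsto\sgn{x}{\alpha}$, with equality forcing every component to coincide. The only cosmetic difference is that you treat the componentwise sum directly (with an explicit verification of monotonicity through the origin), whereas the paper packages the same decomposition as an induction on $\mfs{N}$ and argues the scalar case by contradiction.
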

\begin{proof}
Let $\mf{v}=[v_1,\dots,v_\mfs{N}]^\top, \mf{v}=[w_1,\dots,w_\mfs{N}]^\top$. We prove the result by induction. As induction base $\mfs{N}=1$, assume the opposite, that:
$$
(v_1-w_1)(\sgn{v_1}{\alpha}-\sgn{w_1}{\alpha})< 0.
$$
Consider the case $v_1-w_1>0$ equivalently $v_1>w_1$. Hence, $\sgn{v_1}{\alpha}< \sgn{w_1}{\alpha}$ which is a contradiction since $\sgn{\bullet}{\alpha}$ is strictly increasing. The case $v_1-w_1<0$ follows in the same way, establishing that $(v_1-w_1)(\sgn{v_1}{\alpha}-\sgn{w_1}{\alpha})\geq 0$. Moreover, assume that $(v_1-w_1)(\sgn{v_1}{\alpha}-\sgn{w_1}{\alpha})=0$ for $v_1\neq w_1$. This implies $\sgn{v_1}{\alpha}=\sgn{w_1}{\alpha}$ which is a contradiction.

Now, assume that the result is true for vectors of dimension $\mfs{N}-1$. Hence,
$$
\begin{aligned}
&\left(\mf{v}-\mf{w}\right)^\top\left(\sgn{\mf{v}}{\alpha}-\sgn{\mf{w}}{\alpha}\right) = \sum_{i=1}^\mfs{N} (v_i-w_i)(\sgn{v_i}{\alpha}-\sgn{w_i}{\alpha})=(v_\mfs{N}-w_\mfs{N})(\sgn{v_\mfs{N}}{\alpha}-\sgn{w_\mfs{N}}{\alpha})+ \left(\mf{v}'-\mf{w}'\right)^\top\left(\sgn{\mf{v}'}{\alpha}-\sgn{\mf{w}'}{\alpha}\right) \geq 0
\end{aligned}
$$
where $\mf{v}' = [v_1,\dots,v_{\mfs{N}-1}]^\top, \mf{w}' = [w_1,\dots,w_{\mfs{N}-1}]^\top$ and combining the base case and the assumption for dimension $\mfs{N}-1$. Now, assume that $\left(\mf{v}-\mf{w}\right)^\top\left(\sgn{\mf{v}}{\alpha}-\sgn{\mf{w}}{\alpha}\right)=0$. Hence,
$$
(v_\mfs{N}-w_\mfs{N})(\sgn{v_\mfs{N}}{\alpha}-\sgn{w_\mfs{N}}{\alpha})=-\left(\mf{v}'-\mf{w}'\right)^\top\left(\sgn{\mf{v}'}{\alpha}-\sgn{\mf{w}'}{\alpha}\right).
$$
However, both sides of the previous equation are non negative and thus it must be the case that $v_\mfs{N}=w_\mfs{N}$ and $\mf{v}'=\mf{w}'$ implying that $\mf{v}=\mf{w}$ necessarily.
\end{proof}
\begin{lemma}\cite[Lemma 4]{cruz2019}
\label{le:gain}
    Given $\eta:\mathbb{R}^n\to\mathbb{R}$ and $\gamma:\mathbb{R}^n\to\mathbb{R}$ two continuous homogeneous functions, with weights $\mf{r}=[r_1,\dots,r_n]$ and degrees $m$ with $\eta(\mf{x})\geq 0$, such that the following holds
    $$
    \{\mf{x}\in\mathbb{R}^n\setminus\{0\}:\eta(\mf{x})={0}\}\subseteq \{\mf{x}\in\mathbb{R}^n\setminus\{0\}:\gamma(\mf{x})<0\}.
    $$
Then, there exist $k^*>0$ such that for any $k\geq k^*$ it follows that
$$
-k\eta(\mf{x})+\gamma(\mf{x})<0.
$$
In particular, the previous inequality is satisfied if:
$$
k>k^*=\sup_{\|\mf{x}\|=1} \left(\frac{\gamma(\mf{x})}{\eta(\mf{x})}\right).
$$
\end{lemma}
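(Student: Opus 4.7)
The plan is to exploit the shared $\mf{r}$-homogeneity of $\eta$ and $\gamma$ (both of degree $m$) to reduce the inequality on all of $\mathbb{R}^n\setminus\{0\}$ to a statement on the compact Euclidean unit sphere $S=\{\mf{x}:\|\mf{x}\|=1\}$, and then use continuity of $\gamma$ together with the hypothesis on the zero set of $\eta$ to obtain a finite supremum.

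First, I would observe that the map $\mf{x}\mapsto -k\eta(\mf{x})+\gamma(\mf{x})$ inherits $\mf{r}$-homogeneity of degree $m$. Hence its sign at a nonzero $\mf{x}$ is unchanged under any positive homogeneous dilation of $\mf{x}$. Since for each nonzero $\mf{x}$ there is a unique $\lambda>0$ such that the dilated point lies in $S$ (the equation $\sum_i x_i^2/\lambda^{2r_i}=1$ admits a unique positive solution when the $r_i$ are positive), it suffices to prove the strict inequality on $S$.

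Next, let $K=\{\mf{y}\in S:\eta(\mf{y})=0\}$. By hypothesis $\gamma<0$ on $K$, and since $K$ is compact (being closed in $S$) and $\gamma$ is continuous, there exist $\varepsilon>0$ and an open neighborhood $U\supseteq K$ in $S$ with $\gamma<-\varepsilon$ on $U$. On $U\setminus K$ the ratio $\gamma(\mf{y})/\eta(\mf{y})$ is therefore strictly negative. On the compact complement $S\setminus U$ the continuous function $\eta$ is strictly positive and so attains a positive minimum $\eta_{\min}>0$, while $\gamma$ attains a finite maximum $\gamma_{\max}$; hence $\gamma(\mf{y})/\eta(\mf{y})\le\gamma_{\max}/\eta_{\min}$ there. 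Combining both estimates shows that
$$
k^* := \sup_{\mf{y}\in S,\ \eta(\mf{y})>0}\frac{\gamma(\mf{y})}{\eta(\mf{y})}\ \in\ \mathbb{R},
$$
and this value dominates the sup stated in the lemma (with the convention that the latter is $-\infty$ at points of $K$).

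Finally, for any $k>k^*$ and any $\mf{y}\in S$: if $\eta(\mf{y})>0$ then $-k\eta(\mf{y})+\gamma(\mf{y})=\eta(\mf{y})\bigl(\gamma(\mf{y})/\eta(\mf{y})-k\bigr)<0$; and if $\eta(\mf{y})=0$ then $-k\eta(\mf{y})+\gamma(\mf{y})=\gamma(\mf{y})<0$ by hypothesis. Homogeneity lifts the strict inequality from $S$ to all of $\mathbb{R}^n\setminus\{0\}$, completing the proof. The main technical obstacle is controlling $\gamma/\eta$ near the zero set of $\eta$, where the denominator vanishes; the strict negativity hypothesis on $\gamma$ combined with continuity and compactness is what resolves it.
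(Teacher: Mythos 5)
The paper itself gives no proof of this lemma: it is imported verbatim from \cite[Lemma 4]{cruz2019}, so there is no in-paper argument to compare against. Your proof is correct and is the standard argument for this kind of homogeneous gain-domination result (and essentially the one used in the cited reference): since $-k\eta+\gamma$ is $\mf{r}$-homogeneous of degree $m$ and, for positive weights, every nonzero point dilates to a unique point of the Euclidean unit sphere $S$, it suffices to prove strict negativity on $S$, which you obtain from compactness, continuity, and the hypothesis that $\gamma<0$ on $\{\eta=0\}$, yielding a finite upper bound for $\gamma/\eta$ where $\eta>0$. One minor imprecision: on $S\setminus U$ the stated bound $\gamma(\mf{y})/\eta(\mf{y})\le\gamma_{\max}/\eta_{\min}$ can fail when $\gamma_{\max}<0$; the bound you actually need, $\gamma(\mf{y})/\eta(\mf{y})\le\max\{0,\gamma_{\max}\}/\eta_{\min}$, does hold and still gives a finite supremum, so the conclusion is unaffected. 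Likewise, if your supremum turns out to be nonpositive, replace it by $\max\{k^*,1\}$ to satisfy the lemma's requirement that $k^*>0$; this is cosmetic.
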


\begin{lemma}\cite[Corollary A2]{aldana2023}
\label{lemma:taylor}
    Consider some integer $m$ such that $\mf{f}(t)\in\mathbb{R}^\mfs{N}$ is $m+1$ times differentiable for all $t\geq 0$. Moreover, assume that $\mf{f}^{(m+1)}(t)\in[-\tilde{L},\tilde{L}]^\mfs{N}, \forall t\geq 0$ with some $\tilde{L}$. Then, for any $\Delta>0$ it follows that
    \begin{equation}
    \label{eq:taylor}
{\mf{f}}^{(\mu)}[k+1] = \sum_{\nu=0}^{m-\mu}\left(\frac{\Delta^\nu}{\nu!}\right){\mf{f}}^{(\mu+\nu)}[k] + \mf{r}_{\mu}[k],
    \end{equation}
    where $\mf{f}[k]=\mf{f}(\Delta k)$, where the remainder satisfies
    \begin{equation}
    \label{eq:remainder}
    \mf{r}_{\mu}[k]\in\frac{\tilde{L}\Delta^{m-\mu+1}}{(m-\mu+1)!}[-1,1]^\mfs{N}.
    \end{equation}
\end{lemma}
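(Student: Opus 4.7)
The plan is to reduce the statement to a componentwise application of Taylor's theorem with integral remainder, then bound the remainder uniformly using the hypothesis $\mf{f}^{(m+1)}(t)\in[-\tilde{L},\tilde{L}]^\mfs{N}$.

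First, I would fix an index $\mu\in\{0,\dots,m\}$ and a component $i\in\{1,\dots,\mfs{N}\}$, and view $f_i^{(\mu)}(t)$ as a scalar function of $t$. By hypothesis, this scalar function is $(m-\mu+1)$-times differentiable, and its $(m-\mu+1)$-th derivative is precisely $f_i^{(m+1)}(t)$, which is bounded by $\tilde{L}$ in absolute value. Applying the classical Taylor expansion of order $m-\mu$ around $t=k\Delta$, evaluated at $t=(k+1)\Delta$, yields
\begin{equation}
f_i^{(\mu)}[k+1] \;=\; \sum_{\nu=0}^{m-\mu}\frac{\Delta^\nu}{\nu!}\,f_i^{(\mu+\nu)}[k] \;+\; r_{\mu,i}[k],
\end{equation}
where the remainder can be written in integral form as
\begin{equation}
r_{\mu,i}[k] \;=\; \int_{k\Delta}^{(k+1)\Delta}\frac{\bigl((k+1)\Delta-s\bigr)^{m-\mu}}{(m-\mu)!}\,f_i^{(m+1)}(s)\,\dd s.
\end{equation}

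Next, I would bound $r_{\mu,i}[k]$ using the hypothesis $|f_i^{(m+1)}(s)|\leq \tilde{L}$ uniformly in $s$. Pulling the bound out of the integral and evaluating the remaining polynomial integral explicitly gives
\begin{equation}
|r_{\mu,i}[k]| \;\leq\; \tilde{L}\int_{k\Delta}^{(k+1)\Delta}\frac{\bigl((k+1)\Delta-s\bigr)^{m-\mu}}{(m-\mu)!}\,\dd s \;=\; \frac{\tilde{L}\,\Delta^{m-\mu+1}}{(m-\mu+1)!},
\end{equation}
which is exactly the componentwise version of \eqref{eq:remainder}. Because the bound does not depend on $i$ and the expansion holds componentwise with identical structure, stacking the $\mfs{N}$ scalar expansions into a vector identity yields \eqref{eq:taylor} with $\mf{r}_\mu[k]$ having each component in $\frac{\tilde{L}\Delta^{m-\mu+1}}{(m-\mu+1)!}[-1,1]$, i.e.\ the set membership in \eqref{eq:remainder}.

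The argument is essentially textbook Taylor with integral remainder, so no step is a serious obstacle; the only things to be careful about are (i) checking that the $(m-\mu+1)$-th derivative of $f_i^{(\mu)}$ is indeed $f_i^{(m+1)}$, which is immediate from the regularity assumption, and (ii) recognising that the componentwise bound $|[\mf{f}^{(m+1)}(s)]_i|\leq \tilde{L}$ (with the same $\tilde{L}$ for every $i$), implied by $\mf{f}^{(m+1)}(s)\in[-\tilde{L},\tilde{L}]^\mfs{N}$, is what allows the uniform-in-$i$ remainder bound needed to conclude $\mf{r}_\mu[k]\in\frac{\tilde{L}\Delta^{m-\mu+1}}{(m-\mu+1)!}[-1,1]^\mfs{N}$.
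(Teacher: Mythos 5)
Your argument is correct, but note that the paper does not actually prove this lemma: it is imported verbatim as \cite[Corollary A2]{aldana2023}, so there is no in-paper proof to compare against. What you supply is exactly the standard componentwise Taylor argument that the cited corollary encapsulates: fix $i$ and $\mu$, expand $f_i^{(\mu)}$ to order $m-\mu$ around $k\Delta$, bound the remainder by $\tilde{L}\Delta^{m-\mu+1}/(m-\mu+1)!$, and stack over components using that the bound $\tilde{L}$ is common to all $i$. The only point worth tightening is your choice of remainder form: the hypothesis grants only that $\mf{f}$ is $(m+1)$-times differentiable, so the integral (Cauchy) form of the remainder needs the extra observation that $f_i^{(m)}$ has an everywhere-defined bounded derivative, hence is Lipschitz and absolutely continuous, which justifies writing $r_{\mu,i}[k]$ as an integral of $f_i^{(m+1)}$; alternatively, the Lagrange form $r_{\mu,i}[k]=f_i^{(m+1)}(\xi)\Delta^{m-\mu+1}/(m-\mu+1)!$ for some $\xi\in(k\Delta,(k+1)\Delta)$ requires no such discussion and gives the bound \eqref{eq:remainder} immediately. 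Either way the conclusion stands, and your proof is a legitimate self-contained replacement for the external citation.
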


\section{Weighted homogeneity}
\label{sec:homo}
In this section, we consider set-valued fields $F:\mathbb{R}^n\rightarrow P(\mathbb{R}^n)$, where $P(\mathbb{R}^n)$ is the power set of $\mathbb{R}^n$. Let $\Delta_{\mathbf{r}}(\lambda) = \text{diag}([\lambda^{r_1},\dots,\lambda^{r_n}])$ where $\mathbf{r}=[r_1,\dots,r_n]$ are called the weights and $\lambda>0$. For any $\mf{x}\in\mathbb{R}^n$, the vector
$
\Delta_{\mathbf{r}}(\lambda)\mf{x} = [\lambda^{r_1}x_1,\dots,\lambda^{r_n}x_n]^T
$
is called its standard dilation (weighted by $\mathbf{r}$). The following are some definitions of interest regarding the so-called $\mf{r}$-homogenety with respect to the standard dilation.

\begin{definition}[Homogeneous scalar functions]\cite[Definition 4.7]{bernuau2014}
A scalar function $V:\mathbb{R}^n\to\mathbb{R}$ is said to be $\mathbf{r}$-homogeneous of degree $d$  if
$
V(\Delta_{\mathbf{r}}(\lambda)\mf{x}) = \lambda^dV(\mf{x})
$ for any $\mf{x}\in\mathbb{R}^n$.
\end{definition}

\begin{definition}[Homogeneous set-valued fields]\cite[Definition 4.20]{bernuau2014}\label{def:homo}
A set-valued vector field $F:\mathbb{R}^n\rightarrow P(\mathbb{R}^n)$ is said to be $\mathbf{r}$-homogeneous of degree $d$ if
$
F(\Delta_{\mathbf{r}}(\lambda)\mf{x}) = \lambda^d\Delta_{\mathbf{r}}(\lambda)F(\mf{x})
$ for any $\mf{x}\in\mathbb{R}^n$.
\end{definition}

\section{Recursive form of homogeneous differentiators}
\label{sec:homodif}
In this section, we recall some results used in \cite{levant2003} to show the stability of the Levant's arbitrary order robust exact differentiator. In particular, we are interested in the properties of the recursive system
\begin{equation}
\label{eq:levant_recursive}
    \begin{aligned}
    \dot{z}_1 &= z_{2} - \tilde{k}_1\sgn{z_1+\theta(t)}{\frac{m-1}{m}}\\[0.5em]
    \dot{z}_\mu &= z_{\mu+1} - \tilde{k}_\mu\sgn{z_\mu-\dot{z}_{\mu-1}}{\frac{m-\mu}{m-(\mu-1)}}\quad\quad\quad \text{\normalfont for } \mu=2,\dots,m-1 \\[0.5em]
    \dot{z}_m&\in-\tilde{k}_m\sgn{z_m(t)-\dot{z}_{m-1}(t)}{0}+[-1,1]
    \end{aligned}
\end{equation}
with $z_1(t),\dots,z_m(t)\in\mathbb{R}$ and the measurable map $\theta(t)\in\mathbb{R}$. An important result regarding the contraction property of \eqref{eq:levant_recursive} is given in the following.

\begin{lemma}[Contraction property of \eqref{eq:levant_recursive}]\cite[Lemma 8]{levant2003}\label{prop:levant_contraction}
Let $\theta:\mathbb{R}\to[-\bar{\theta},\bar{\theta}]$ with $\bar{\theta}>0$. Given any $B_{z_\mu}^{t_0},B_{z_\mu}^{\infty}>0$, with $B_{z_\mu}^{\infty}<B_{z_\mu}^{t_0}, \mu=1,\dots,m, \delta_\infty>0$ then, there exists $B_{z_\mu}>B_{z_\mu}^{t_0}$, some gains $\tilde{k}_1,\dots,\tilde{k}_m>0$ (sufficiently big) and $\bar{\theta}>0$ (sufficiently small) such that any trajectory of \eqref{eq:levant_recursive} satisfying $|z_\mu(t_0)|\leq B_{z_\mu}^{t_0}$ for arbitrary $t_0\geq 0$ will satisfy $|z_\mu(t)|\leq B_{z_\mu}, \forall \Elio{t\in[t_0,t_0+\delta_\infty)}$ and $|z_\mu(t)|\leq B_{z_\mu}^\infty, \forall t\geq t_0+\delta_\infty$. 
\end{lemma}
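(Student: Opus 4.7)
The plan is to proceed by induction on $m$, combining a sliding-mode argument for the leading coordinate with the homogeneity structure of \eqref{eq:levant_recursive} (degree $-1$, with weights $r_\mu = m-\mu+1$) to propagate the contraction down the chain.

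For the boundedness claim on $[t_0, t_0+\delta_\infty)$, the key observation is that every right-hand side of \eqref{eq:levant_recursive} is dominated by a sub-linear function of the state, since every exponent $(m-\mu)/(m-(\mu-1))$ lies in $[0,1)$, $\theta$ is bounded by $\bar\theta$, and the set-valued term in the last equation is contained in $[-1,1]$. A standard Gronwall-type comparison argument rules out finite-time escape on any bounded interval and yields the existence of $B_{z_\mu} > B_{z_\mu}^{t_0}$ on $[t_0, t_0+\delta_\infty)$.

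For the contraction claim, the base case $m=1$ reduces to the scalar inclusion $\dot{z}_1 \in -\tilde{k}_1 \sgn{z_1+\theta(t)}{0} + [-1,1]$. A direct Lyapunov estimate with $V_1 = z_1^2/2$ gives $\dot V_1 \le -(\tilde{k}_1-1)|z_1|$ whenever $|z_1| > \bar\theta$, so that for $\tilde{k}_1 > 1$ the state is driven into an $O(\bar\theta)$-neighborhood of the origin in finite time, which yields the required $B_{z_1}^\infty$ by choosing $\bar\theta$ small. For the inductive step, assume the lemma holds for systems of dimension $m-1$. Pick $\tilde{k}_1$ large enough so that the first equation drives $z_1$ to a small neighborhood of $-\theta(t)$ in finite time while all other states remain inside the $B_{z_\mu}$ produced in the previous paragraph. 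Along the resulting quasi-sliding behavior, the Filippov-sense derivative $\dot z_1$ is uniformly bounded by a quantity controllable via the $B_{z_\mu}$ and the gains; this bounded $\dot z_1$ then plays the role of a new disturbance for the reduced subsystem in $(z_2,\dots,z_m)$, which has exactly the same recursive structure as \eqref{eq:levant_recursive} with $m$ replaced by $m-1$. The induction hypothesis delivers the contraction of $z_2,\dots,z_m$ to their prescribed $B_{z_\mu}^\infty$ after a further finite time, after which the first equation contracts $z_1$ to $B_{z_1}^\infty$ because $z_2$ has become small.

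The main obstacle is making rigorous the passage from ``$z_1$ quasi-slides on $z_1+\theta=0$'' to ``$\dot z_1$ is a bounded disturbance for the reduced subsystem'', because $\theta$ is only assumed bounded and one cannot differentiate the sliding surface equation. This must be handled entirely within the Filippov differential-inclusion framework: on the sliding surface the equivalent control is a set-valued map with uniform bounds depending only on $B_{z_2}$ and the gains, and this set-valued bound is what is fed into the reduced system for the induction. Homogeneity then ensures that the contraction rates and the neighborhood sizes at every level scale consistently, so that with gains chosen sufficiently large and $\bar\theta$ sufficiently small, all bounds $B_{z_\mu}^\infty$ are met simultaneously after the prescribed $\delta_\infty$.
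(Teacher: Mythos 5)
Note first that the paper does not actually prove this statement: it is imported verbatim from Levant's 2003 paper (Lemma 8 there), and the present authors only cite it. So your attempt is a reconstruction of Levant's argument. Your first part (no finite-time escape, hence existence of $B_{z_\mu}$ on $[t_0,t_0+\delta_\infty)$) and your base case $m=1$ are fine in substance.

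The inductive step, however, has a genuine gap, exactly at the point you flag and then do not resolve. For $m\geq 2$ the injection $\tilde{k}_1\sgn{z_1+\theta}{\frac{m-1}{m}}$ in the first equation of \eqref{eq:levant_recursive} is \emph{continuous}; only the last equation is discontinuous. Hence there is no sliding mode on $z_1+\theta=0$ and no Filippov ``equivalent control'' associated with that set, and since $\theta$ is merely bounded and measurable (possibly oscillating arbitrarily fast within $[-\bar\theta,\bar\theta]$), the absolutely continuous state $z_1$ cannot track $-\theta(t)$ pointwise at all. What a standard estimate with $V=z_1^2/2$ actually yields is that $|z_1|$ eventually enters a ball of radius about $\bar\theta+(B_{z_2}/\tilde{k}_1)^{\frac{m}{m-1}}$, i.e.\ $z_1$ becomes small, not $z_1+\theta$. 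But what your induction requires to be small is the new disturbance $\theta'=-\dot z_1=\tilde{k}_1\sgn{z_1+\theta}{\frac{m-1}{m}}-z_2$, and nothing in your argument makes it small: $|z_2|$ is only bounded by $B_{z_2}$ (of the order of the initial data), and the injection term can be as large as $\tilde{k}_1(2\bar\theta)^{\frac{m-1}{m}}$, which \emph{grows} with the gain. A bound ``depending on $B_{z_2}$ and the gains'' does not meet the induction hypothesis, which demands a disturbance bound that is sufficiently small relative to the prescribed targets $B_{z_\mu}^{\infty}$; and making $\dot z_1$ small requires $z_2$ to be small, which is precisely what the reduced-level contraction is supposed to deliver, so the step is circular as written. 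Levant's original proof breaks this circularity by a considerably more delicate combination of boundedness lemmas, robustness of the lower-order block to bounded (not necessarily small) disturbances, and the homogeneity rescaling; an alternative rigorous route is the Lyapunov construction of Cruz-Zavala and Moreno \cite{cruz2019}. As it stands, your inductive step does not close.
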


\end{document}